\setlist[itemize]{leftmargin=*}
\DeclareMathAlphabet\mathbfcal{OMS}{cmsy}{b}{n}
\newtheorem{mylem}{Lemma}
\newtheorem{asmp}{Assumption}
\newtheorem{mycor}{Corollary}
\newtheorem{myprs}{Proposition}
\def\BibTeX{{\rm B\kern-.05em{\sc i\kern-.025em b}\kern-.08em
    T\kern-.1667em\lower.7ex\hbox{E}\kern-.125emX}}
\title{Quick Updates for the Perturbed Static Output Feedback Control Problem in Linear Systems with Applications to Power Systems}
\author{MirSaleh Bahavarnia$^\dagger$ and Ahmad F. Taha$^{\dagger,\star}$
	\thanks{$^\dagger$The authors are with the Department of Civil and Environmental Engineering, Vanderbilt University, TN 37235. $^\star$Dr. Taha has a secondary appointment with the Department of Electrical and Computer Engineering. Emails: \{mirsaleh.bahavarnia,ahmad.taha\}@vanderbilt.edu.}
 \thanks{This work is supported by the National Science Foundation (NSF) under Grants ECCS 2151571 and CMMI 2152450.}
}
\begin{document}

\maketitle

\begin{abstract}
This paper introduces a method for efficiently updating a nominal stabilizing static output feedback (SOF) controller in perturbed linear systems. As operating points and state-space matrices change in dynamic systems, accommodating updates to the SOF controller are necessary. Traditional methods address such changes by re-solving for the updated SOF gain, which is often (\textit{i}) computationally expensive due to the NP-hard nature of the problem or (\textit{ii}) infeasible due to the limitations of its semi-definite programming relaxations. To overcome this, we leverage the concept of \textit{minimum destabilizing real perturbation} (MDRP) to formulate a norm minimization problem that yields fast, reliable controller updates. This approach accommodates a variety of known perturbations, including abrupt changes, model inaccuracies, and equilibrium-dependent linearizations. We remark that the application of our proposed approach is limited to the class of SOF controllers in perturbed linear systems. We also introduce geometric metrics to quantify the proximity to instability and rigorously define stability-guaranteed regions. Extensive numerical simulations validate the efficiency and robustness of the proposed method. Moreover, such extensive numerical simulations corroborate that although we utilize a heuristic optimization method to compute the MDRP, it performs quite well in practice compared to an existing approximation method in the literature, namely the hybrid expansion-contraction (HEC) method. We demonstrate the results on the SOF control of multi-machine power networks with changing operating points, and demonstrate that the computed quick updates produce comparable solutions to the traditional SOF ones, while requiring orders of magnitude less computational time. 
\end{abstract}

\begin{IEEEkeywords}
Control applications; optimization; power systems; stability of linear systems; static output feedback controller. 
\end{IEEEkeywords}

\section{Introduction and Paper Contributions}\label{sec:Intro}

 \IEEEPARstart{T}{he static} output feedback (SOF) control problem, defined as follows for linear systems:
 $$\text{find stabilizing} \;\;\; F \;\;\; \text{for} \;\;\;  \dot{x} = Ax + B u, \;u = F y, \; y = C x$$
 is one of the simplest and most classic control engineering problems.  In particular, the SOF controller seeks to find $F=F^{\text{nominal}}$ that stabilizes the closed-loop system matrix $A+BFC$ given an unstable open-loop matrix $A$. This problem, though seemingly innocuous with a control architecture that is arguably the simplest among control routines, is known to be NP-hard due to the non-convexity in the feasible space of the SOF gain $F$. Over the past three decades, various approaches have been proposed to solve the SOF problem. Many rely on recasting the SOF problem as a convex semi-definite program (SDP) with linear matrix inequalities (LMI) constraints. 
 
 In this paper, given an initially stabilizing SOF $F=F^{\text{nominal}}$  controller, we focus on computing quick updates for the SOF problem $F^{\text{updated}}$ when the state-space matrices $A$, $B$, and $C$ change. The change can be due to either parametric uncertainty, improved updates of the system parameters, or a known change in the equilibrium point of a nonlinear system (thereby resulting in an updated small-signal, linearized model). The common approach to address this problem, when the operating point of the system is updated, is to simply re-solve the SOF and hence obtain a new $F^{\text{updated}}$. Here in this paper, we showcase that closed-form expressions for the solutions of the perturbed/updated SOF problem can be computed without the need to perform any intensive computations or optimization. This becomes very critical in application areas where the sampling time is small and updating the SOF gain for mid- to large-scale systems requires a long computational time. 
 
This problem is \textit{not} only a theoretical one. It is indeed relevant in power grids where the operating point changes every few minutes, as a result of updating optimal power flow (OPF) solutions, which yield perturbed or updated state-space matrices. A plethora of feedback control architectures in power systems relies on updating feedback gains once a new operating point is computed. More motivations and discussions for this application are given later in the paper. Throughout the paper, we assume that the perturbations or updated system matrices are known (as most linearizations yield known system matrices). Problems on robust control and unknown uncertainty are outside the scope of this work and hence are not discussed. In what follows is a brief literature review on this theoretical problem.  
 
\noindent \textbf{Brief Literature Review.} A key perspective to guide solving the perturbed SOF problem is through the notion of \textit{distance to instability}. This concept is a significant classical notion in control theory \cite{barnett1966insensitivity,chang1972adaptive,wong1977closed,patel1977robustness,patel1980quantitative,barrett1980conservatism,yedavalli1985improved,yedavalli1986reduced}. Distance to instability means how sensitive the stability of the control system is against the perturbations/uncertainties or, more generally, changes or updates in the state-space matrices. The varying nature of engineering systems' models necessitates the thorough analysis of distance to instability and its potential applications to develop robustly stable engineering systems. Several studies have quantitatively investigated the impacts of perturbations on the distance to instability of the control systems. In \cite{barnett1966insensitivity,wong1977closed}, a class of non-destabilizing linear constant perturbations is characterized for the linear-quadratic state feedback (LQSF) designs. The authors in \cite{chang1972adaptive} propose a guaranteed cost LQSF for which the closed-loop system is stable for any variation of a vector-valued parameter. In \cite{patel1977robustness}, for the LQSF designs, the distance-to-instability bounds are derived based on the algebraic Riccati equation (ARE) and Lyapunov stability theory. In \cite{patel1980quantitative}, bounds on the non-destabilizing time-varying nonlinear perturbations are obtained for asymptotically stable linear systems to provide computationally efficient quantitative measures. Various distance-to-instability tests are investigated in \cite{barrett1980conservatism} to highlight the trade-off between the distance-to-instability conservatism and the information about the perturbation. In \cite{yedavalli1985improved}, utilizing the Lyapunov stability theory, the author has proposed an improved non-destabilizing perturbation bound over the bound proposed by \cite{patel1980quantitative}. Taking advantage of appropriately chosen coordinate transformations, the authors in \cite{yedavalli1986reduced} have reduced the conservatism of non-destabilizing perturbation bounds proposed by \cite{patel1980quantitative,yedavalli1985improved}.

In contrast to the studies mentioned above, we do not go through the derivation of non-destabilizing perturbation bounds. Instead, we mainly focus on attenuating the impacts of perturbations on the system stability via \textit{updating} a nominal stabilizing static output feedback (SOF) controller. With that in mind, the control problem considered in this paper is an SOF controller \textit{update} problem.

As mentioned earlier, the SOF controller stabilization problem is known to be an NP-hard problem as it is intrinsically equivalent to solving a bi-linear matrix inequality (BMI) \cite{toker1995np}. Then, utilizing a typical approach by repeating the whole controller design procedure can become computationally cumbersome. Also, we do not utilize any Lyapunov-based approach (like approaches utilized by feedback stabilization research works \cite{petersen1987stabilization,khargonekar1990robust,xie1992h,peaucelle2005ellipsoidal,arastoo2016closed,jennawasin2021iterative,viana2023convex}) as it enforces an extra computational burden---mostly in the case of BMI and LMI formulations in SDPs \cite{majumdar2020recent}, which is not desired in terms of computational efficiency. Remarkably, the Lyapunov-based SOF controller synthesis hinges on approximately solving BMIs \cite{henrion2005solving,dinh2011combining} or incorporating sufficient LMI conditions \cite{crusius1999sufficient,arastoo2016closed}, which induces a conservatism. The alternative non-Lyapunov approach that we take is built upon the notion of \textit{minimum destabilizing real perturbation (MDRP)} \cite{van1984near}, which has inspired \cite{bahavarnia2017state,bahavarnia2019state} to synthesize sparse feedback controllers for the large-scale systems. The paper utilizes the fundamental linear algebraic results from \cite{horn2012matrix} where necessary.

\vspace{0.2cm}

\noindent {\bf Paper Contributions.} The main contributions of this paper can be itemized as follows:
\begin{itemize}
\item Built upon the notion of MDRP, we construct a simple norm minimization problem to propose a novel update of a nominal stabilizing SOF controller that can be applied to various control engineering applications in the case of perturbed scenarios like abrupt changes, inaccurate system models, or equilibrium-dependent linearized dynamics. We remark that the application of our proposed approach is limited to the class of SOF controllers in perturbed linear systems.
\item We propose novel updates of nominal stabilizing SOF controllers and derive sufficient stability conditions considering a known norm-bounded perturbation.
\item We define geometric metrics to quantitatively measure the distance-to-instability of the proposed updates of nominal stabilizing SOF controllers and characterize the corresponding guaranteed stability regions.
\item Through extensive numerical simulations, we validate the effectiveness of the theoretical results and present a thorough analysis of the numerical visualizations. We also demonstrate applications in feedback control of multi-machine power networks. We observe that utilizing the quick updates results in orders of magnitude faster updates for the SOF gain for even small-scale systems. For mid- to large-scale power systems, we note that the SDP-based approach to update the SOF gain requires tens of hours in computational time and is hence impractical. 
\end{itemize}

\noindent {\bf Paper Structure.} The remainder of the paper is structured as follows: Section \ref{sec:ProFor} motivates the main objective of the paper by raising a research question to be answered throughout the following sections. Section \ref{sec:Main} presents a novel updated stabilizing SOF controller via updating a nominal stabilizing SOF controller built upon a simple norm minimization problem (a least-squares problem). Section \ref{sec:MR} presents the main results of the paper while detailing the stability regions for the corresponding updated stabilizing SOF controllers. Through various numerical simulations, Section \ref{sec:Example} verifies the effectiveness of the theoretical results. Finally, the paper is concluded by drawing a few concluding remarks in Section \ref{Con}.

\noindent {\bf Paper Notation.} We denote the vectors and matrices by lowercase and uppercase letters, respectively. To represent the set of complex numbers, real numbers, positive real numbers, $n$-dimensional real-valued vectors, and $m \times n$-dimensional real-valued matrices, we respectively use $\mathbb{C}$, $\mathbb{R}$, $\mathbb{R}_{++}$, $\mathbb{R}^n$, and $\mathbb{R}^{m \times n}$. We represent the real part of a complex-valued number $z$ by $\Re(z)$. We denote the identity matrix of dimension $n$ with $I_n$. For a square matrix $M$, $\alpha(M)$ and $\Lambda(M)$ represent the spectral abscissa (i.e., the maximum real part of the eigenvalues) of $M$ and the spectrum of $M$, respectively. We say a square matrix $M$ is stable (Hurwitz) if $\alpha(M) < 0$ holds. For a matrix $M$, symbols $M^\top$, $\|M\|_F$, $\mathbf{vec}(M)$, and $U_M \Sigma_M V_M^\top$ denote its transpose, Frobenius norm, vectorization, and singular value decomposition (SVD), respectively. For a Laplace transfer function $T(s)$, we denote its $\mathcal{H}_{\infty}$ norm by $\|T(s)\|_{\mathcal{H}_{\infty}}$. Given a full-column rank matrix $M$, $M^{+} := (M^\top M)^{-1}M^\top$ denotes the Moore-Penrose inverse of $M$. We represent the Kronecker product with the symbol $\otimes$. For a vector $v$, we respectively denote its Euclidean norm and vectorization inverse with $\|v\|$ and $\mathbf{vec}^{-1}(v)$ where $\mathbf{vec}^{-1}(v)$ is a matrix that satisfies $\mathbf{vec}(\mathbf{vec}^{-1}(v)) = v$. We represent the set union with $\cup$. Given two real numbers $a < b$, we denote the open, closed, and half-open intervals with $]a,b[$, $[a,b]$, $[a,b[$, and $]a,b]$, respectively. We denote the computational complexity with big O notation, i.e., $\mathcal{O}()$. 

\section{Problem Statement} \label{sec:ProFor}

We consider the following linear state-space model:
\begin{align}
\dot{x}(t) &= (A + BF^{\text{nominal}}C) x(t) \label{SD}
\end{align}
where $x(t) \in \mathbb{R}^{n}$, $A \in \mathbb{R}^{n \times n}$, $B \in \mathbb{R}^{n \times m}$, $C \in \mathbb{R}^{p \times n}$, and $F^{\text{nominal}} \in \mathbb{R}^{m \times p}$ denote the state vector, state matrix, input matrix, output matrix, and a nominal stabilizing SOF controller matrix via the control law $u(t) = F^{\text{nominal}} y(t)$. The existence of $F^{\text{nominal}}$ entails that the closed-loop system is stable, i.e., $\alpha(A+BF^{\text{nominal}}C) < 0$ holds. 

Suppose that a known norm-bounded perturbation $\Delta \in \mathbb{R}^{n \times n}$ with an upper bound $\rho > 0$ on its Frobenius norm, (i.e., $0 < \|\Delta\|_F \le \rho$) hits the state-space model \eqref{SD} as follows:
\begin{align}
\dot{x}(t) &= (A + BF^{\text{nominal}}C + \Delta) x(t). \label{PerturbedSD}
\end{align} 
We note here an important observation. This perturbation $\Delta$ can model a variety of control engineering scenarios: (\textit{i}) changes in the operating/equilibrium point of a nonlinear system resulting in an updated state-space matrices $A$, $B$, and $C$, meaning that the $\Delta$ matrix captures the difference between the old and updated matrices; (\textit{ii}) some updates to the linear system state-space matrices when a parameter is changing in the system, necessitating knowing what this parametric change entails for the to-be-updated SOF gain; (\textit{iii}) topological changes in the dynamic network structure when connections between nodes are either established or removed. The three scenarios (\textit{i})--(\textit{iii}) have an abundance of applications in control engineering, and all boil down to requiring updating the feedback gain $F^{\text{nominal}}$ to accommodate changes in the state-space matrices and hence ensuring closed-loop system stability. In all three scenarios, matrix $\Delta$ is assumed to be known---a reasonable assumption in scenarios (\textit{i})--(\textit{iii}). 

Similar to \cite{patel1977robustness,patel1980quantitative,barrett1980conservatism,van1984near,yedavalli1986reduced}, we choose the Frobenius norm over the spectral norm as it provides more analytic convenience. On the one hand, for non-destabilizing perturbations (e.g., sufficiently small perturbations), although $A + BF^{\text{nominal}}C + \Delta$ in \eqref{PerturbedSD} is still a stable matrix, the distance to instability can be degraded, which is undesired. On the other hand, for destabilizing perturbations (e.g., more severe perturbations), $A + BF^{\text{nominal}}C + \Delta$ in \eqref{PerturbedSD} can become unstable, which is even more undesired. To attenuate the impacts of such perturbations on the distance-to-instability and the stability, a \textit{typical} approach can be repeating the whole controller design procedure to find a new SOF controller, namely $F^{\text{typical}}$, to stabilize $A + \Delta$ and get a stable $A + \Delta + B F^{\text{typical}} C$. Such a typical approach can be computationally inefficient and even infeasible in some cases. 

In general, the SOF controller stabilization problem 
is known to be an NP-hard problem as it is intrinsically equivalent to solving a BMI \cite{toker1995np}
\begin{align*}
    & (A+B\textcolor{red}{F}C)^\top \textcolor{red}{P} + \textcolor{red}{P}(A+B\textcolor{red}{F}C) \prec 0
\end{align*}
for $\textcolor{red}{P} \succ 0$ and $\textcolor{red}{F}$ which is computationally challenging for large-scale systems. Moreover, the alternative sufficient condition-based LMI \cite{crusius1999sufficient}
\begin{align*}
    A^\top \textcolor{blue}{P} + \textcolor{blue}{P} A + C^\top \textcolor{blue}{N}^\top B^\top + B \textcolor{blue}{N} C \prec 0,\;\;\;
    B \textcolor{blue}{M} = \textcolor{blue}{P} B
\end{align*}
may likely either end up with an infeasible situation for $(\textcolor{blue}{P},\textcolor{blue}{N},\textcolor{blue}{M})$ and $\textcolor{blue}{F} = \textcolor{blue}{M}^{-1} \textcolor{blue}{N}$ or computational burden in the case of large-scale systems. Indeed, solving the LMIs for a system with only a few hundred states on standard off-the-shelf SDP solvers like SDPT3 \cite{toh1999sdpt3} and MOSEK \cite{andersen2000mosek} takes many hours, which is often far longer than the sampling time of the system. 

Motivated by issues mentioned above and utilizing a simple norm minimization problem built upon the notion of MDRP \cite{van1984near}, we propose a novel update of a nominal stabilizing SOF controller that can be applied to various control engineering applications in the case of perturbed scenarios like abrupt changes, inaccurate system models, or equilibrium-dependent linearized dynamics. We remark that the application of our proposed approach is limited to the class of SOF controllers in perturbed linear systems and emphasize that the stability analysis from the general perspective of the switched nonlinear systems is beyond the scope of this paper. We meanwhile admit the fact that for the case of general unknown perturbations, various well-developed robust control tools have already been presented in the literature. In a nutshell, the main objective of this paper is to find an answer to the following research question:

\vspace{0.2cm}

\noindent \textit{\textbf{Q1:} Given the perturbed state-space model \eqref{PerturbedSD}, how can we update a nominal stabilizing SOF controller $F^{\text{nominal}}$ such that the closed-loop system remains stable?}

\section{Updating the Nominal Stabilizing SOF Controller} \label{sec:Main}

This section consists of twofold: (\textit{i}) motivation and \textit{(ii) main idea}. First, we present what motivates us to propose a novel update of a nominal stabilizing SOF controller. Second, we detail the main idea behind the proposed updated stabilizing SOF controller.
\subsection{Motivation}

To improve the distance to instability of the perturbed state-space \eqref{PerturbedSD}, let us consider the updated stabilizing SOF controller, as $F^{\text{nominal}} + G$ ($G$: update matrix), with the following state-space model:
\begin{align}
\dot{x}(t) &= (A + \Delta + B(F^{\text{nominal}} + G)C) x(t). \label{RDPerturbedSD}
\end{align}
For instance, for the special case of the typical approach, $G^{\text{typical}} = F^{\text{typical}} - F^{\text{nominal}}$ holds and for the case of our proposed method, we have $G^{\text{updated}} = F^{\text{updated}} - F^{\text{nominal}}$. For simplicity, we drop the superscript from $F^{\text{nominal}}$ and simply denote it by $F$ from now on.

Defining the notion of \textit{minimum destabilizing real perturbation} (MDRP) of a given stable matrix $\mathcal{A} \in \mathbb{R}^{n \times n}$, namely $\beta_{\mathbb{R}}(\mathcal{A})$, as follows ($(3.2)$ in \cite{van1984near}):
\begin{align*}
\beta_{\mathbb{R}}(\mathcal{A}) &:= \min \{ \| \mathcal{X} \|_F: \alpha(\mathcal{A} + \mathcal{X}) = 0, \mathcal{X} \in \mathbb{R}^{n \times n} \}
\end{align*}
and choosing $\mathcal{A} = A + BFC$ and $\mathcal{X} = BGC + \Delta$ based on the updated perturbed state-space model \eqref{RDPerturbedSD}, we see that if
\begin{align} \label{ImpIneq}
\|BGC + \Delta\|_F &< \beta_{\mathbb{R}}(A+BFC)
\end{align} 
holds, then $A + \Delta + B(F + G)C$ is stable, i.e., $F +G$ is an updated stabilizing SOF controller for $A + \Delta$. Inequality \eqref{ImpIneq} motivates us to search for an efficient update $F + G$ via minimizing the $\|BGC + \Delta\|_F$.

In the sequel, we present the lower and upper bounds on MDRP of $A + BFC$, followed by a brief description of its exact value computation. It is noteworthy that the MDRP is also known as the real stability radius (RSR) \cite{hinrichsen1986stability}.
 
\subsubsection{Lower bound}Considering the fact that $\alpha(X)$ is a continuous function with respect to $X$, we have by definition
\begin{align*}
& \forall \epsilon > 0, \exists \delta(\epsilon) > 0,~\mathrm{s.t.}~\mathrm{if}~\|\mathcal{X}\|_F < \delta(\epsilon)~\mathrm{holds},~\mathrm{then} \notag \\
& \alpha(\mathcal{A}) - \epsilon < \alpha(\mathcal{A} + \mathcal{X}) < \alpha(\mathcal{A}) + \epsilon~\mathrm{holds}.
\end{align*}
Then, choosing $\mathcal{A} = A + BFC$ and $\mathcal{X} = BGC + \Delta$, we realize that for any $\epsilon$ satisfying $\epsilon < -\alpha(A+BFC)$, if $\|BGC + \Delta\|_F < \delta(\epsilon)$ holds, then $A + \Delta + B(F + G)C$ is stable. That suggests the following lower bound on MDRP of $A + BFC$:\begin{subequations} \label{LoBo}
\begin{align}
& 0 < \delta_{\sup} \le \beta_{\mathbb{R}}(A+BFC)\\
& \delta_{\sup} := \sup \{\delta(\epsilon): \epsilon \in ]0,-\alpha(A+BFC)[\}. 
\end{align}
\end{subequations}

Since the RCR is lower bounded by the complex stability radius (CSR) and it is a well-known fact that $\mathrm{CSR} = \frac{1}{\|(sI_n-(A+BFC))^{-1}\|_{\mathcal{H}_{\infty}}}$ holds \cite{hinrichsen1990real}, we then get the following lower bound on MDRP of $A+BFC$:
\begin{align} \label{CSR}
    \underbrace{\frac{1}{\|(sI_n-(A+BFC))^{-1}\|_{\mathcal{H}_{\infty}}}}_{\beta_{\mathbb{R}}^l} \le \beta_{\mathbb{R}}(A+BFC).
\end{align}

\subsubsection{Upper bound}On one hand, since $\alpha(\mathcal{A} + \mathcal{X}) = 0$ holds for the choice of $\mathcal{X} = -\alpha(\mathcal{A}) I_n$, then choosing $\mathcal{A} = A + BFC$ and $\mathcal{X} = -\alpha(A+BFC) I_n$, we get the following upper bound on MDRP of $A + BFC$ \cite{van1984near}:
\begin{align} \label{UpBo}
& \beta_{\mathbb{R}}(A+BFC) \le -\sqrt{n}\alpha(A+BFC).
\end{align}
On the other hand, given $\mathcal{A} = U_{\mathcal{A}} \Sigma_{\mathcal{A}} V_{\mathcal{A}}^\top$ as the singular value decomposition (SVD) of $\mathcal{A}$ and choosing $\mathcal{X} = - \sigma_{\mathcal{A}}^{\min} u_\mathcal{\mathcal{A}}^{\min} v_{\mathcal{A}}^{\min T}$ (superscript $\min$ denotes the corresponding minimum singular value and vectors), it can be verified that $\alpha(\mathcal{A} + \mathcal{X}) = 0$ holds. Then, choosing $\mathcal{A} = A + BFC$ and according to \eqref{UpBo}, we get the following upper bound on MDRP of $A + BFC$ \cite{van1984near}:\begin{subequations} \label{UpBoG}
\begin{align}
& \beta_{\mathbb{R}}(A+BFC) \le \beta_{\mathbb{R}}^{u}\\
& \beta_{\mathbb{R}}^{u} = \min \{\sigma^{\min}(A+BFC),-\sqrt{n}\alpha(A+BFC)\}.
\end{align}
\end{subequations}
For the special case of a symmetric matrix $A+BFC$, since $\sigma^{\min}(A+BFC) = -\alpha(A+BFC)$ holds, \eqref{UpBoG} reduces to
\begin{align} \label{UpBoDiag}
& \beta_{\mathbb{R}}(A+BFC) \le -\alpha(A+BFC)
\end{align}
which is a tighter bound compared to the upper bound in \eqref{UpBo}. According to Corollary $3.5.$ in \cite{hinrichsen1986stability} and noting that $\|\mathcal{X}\| \le \|\mathcal{X}\|_F$ holds for any $\mathcal{X}$ \cite{horn2012matrix}, it can be verified that $\sigma^{\min}(A+BFC) \le \beta_{\mathbb{R}}(A+BFC)$ holds and the equality in \eqref{UpBoDiag} is consequently satisfied.
 
\subsubsection{Exact value} Unfortunately, computing the exact value of $\beta_{\mathbb{R}}(A+BFC)$ is computationally intractable (NP-hard), and no practical estimation technique could be devised due to the difficulty of the ensuing constrained minimization problem cast by \cite{van1984near}. Also, there is no systematic tractable way to compute the exact value of the lower bound $\delta_{\sup}$ in \eqref{LoBo} since we only know about the existence of $\delta(\epsilon)$ and nothing more. However, taking advantage of the lower and upper bounds on $\beta_{\mathbb{R}}(A+BFC)$ (derived in \eqref{CSR} and \eqref{UpBoG}, respectively), we may utilize heuristics to obtain an appropriate approximate value of $\beta_{\mathbb{R}}(A+BFC)$ in a reasonable computational time. Since \eqref{ImpIneq} plays a significant role in the characterization of the stability regions, the tightness of the upper bound on $\beta_{\mathbb{R}} (A+BFC)$ in \eqref{UpBoG} becomes crucial. Remarkably, if the equality in \eqref{UpBoG} becomes active (i.e., the case of a tight upper bound), then the proposed updated stabilizing SOF controller in this paper becomes efficient as it only requires the value of $\beta_{\mathbb{R}}^{u}$ which can efficiently be computed (e.g., the case of a symmetric $A+BFC$ for which $\beta_{\mathbb{R}}(A+BFC) = -\alpha(A+BFC)$ holds). For the special case of structured perturbation, i.e., $\Delta = B M C$ for a matrix $M \in \mathbb{R}^{m \times p}$, one may compute MDRP via (frequency domain)-based algorithms detailed by \cite{hinrichsen1990real}. 

Guglielmi et al. \cite{guglielmi2017approximating} have proposed a hybrid expansion-contraction (HEC)-based method, namely Algorithm HEC-RF: HEC real Frobenius norm, to approximate the RSR. Precisely, the expansion phase is solved via the algorithm SVSA-RF (spectral value set abscissa: real Frobenius norm), and the contraction phase is solved via a Newton-bisection zero-finding algorithm. The HEC-based proposed method is quadratically convergent and efficient for large-scale and sparse linear dynamical systems. Moreover, their extensive numerical experiments reveal that the approximated RSR coincides with the exact RSR in most cases. Due to such a promising performance of the HEC-based method, one can utilize it as an efficient method to compute the RSR (i.e., MDRP).
 
\subsection{Main idea of the updated gain}

Since \eqref{ImpIneq} provides a sufficient condition on the stability of $A + \Delta + B(F + G)C$, our main idea to propose an efficient updated stabilizing SOF controller $F + G$ is to compute $G$ via minimizing $\|BGC + \Delta\|_F^2$ and verifying that under which conditions, the minimized value of $\|BGC + \Delta\|_F^2$ would be less than $\beta_{\mathbb{R}}(A+BFC)^2$. It is noteworthy that if the most optimistic scenario occurs, (i.e., the scenario in which for a known $\Delta$, equation $\|BGC + \Delta\|_F = 0$ has a solution $G$), then one can completely cancel out the effect of the hitting perturbation $\Delta$ and retrieve the primary unperturbed $A + BFC$ as detailed later on. With that in mind and to find a reasonable answer to the question stated in Section \ref{sec:ProFor} (\textit{Q1}), we consider the following optimization problem:
\begin{align} \label{OP}
& \underset{G \in \mathbb{R}^{m \times p}}{\min}~\|BGC + \Delta\|_F^2.
\end{align}
By vectorizing $BGC + \Delta$, defining $g := \mathbf{vec}(G), \delta := \mathbf{vec}(\Delta), H := C^\top \otimes B$, and noting that $\mathbf{vec}(\mathcal{XYZ}) = (\mathcal{Z}^\top \otimes \mathcal{X}) \mathbf{vec}(\mathcal{Y})$ holds for any triplet $(\mathcal{X},\mathcal{Y},\mathcal{Z})$ with consistent dimensions and $\| \mathbf{vec}(X)\| = \| X \|_F$ holds for any $X$, optimization problem \eqref{OP} can equivalently be cast as the following least-squares problem \cite{boyd2004convex}:
\begin{align} \label{EOP}
& \underset{g \in \mathbb{R}^{mp}}{\min}~\|Hg + \delta\|^2.
\end{align}
In this paper, we assume that the following standard assumption holds for $B$ and $C$.
\begin{asmp} \label{Asmp1}
    We assume that $B$ and $C$ are full-column rank and full-row rank, respectively.
\end{asmp}
\noindent According to Assumption \ref{Asmp1} and noting that identity $(C^\top \otimes B)^{+} = {C}^{\top+} \otimes B^{+}$ holds, optimization problem \eqref{EOP} can analytically be solved as
\begin{align} \label{goptsol}
g^{\ast}_{\delta} &= -({C}^{\top +} \otimes B^{+})\delta
\end{align} 
and the analytic optimal solution of \eqref{OP} can subsequently be presented as follows:
\begin{align} \label{Osol}
G^{\ast}_{\Delta} &= \mathbf{vec}^{-1}(g^{\ast}_{\delta}) = -B^{+} \Delta ({C}^{\top +})^\top
\end{align}
for which the computational complexity is $\mathcal{O}(n^2 \min \{m,p\})$ while the computational complexity of \eqref{goptsol} is $\mathcal{O}(n^2 m^2 p^2)$. Substituting $g^{\ast}_{\delta}$ of \eqref{goptsol} in \eqref{EOP}, the optimal value of the objective function in \eqref{EOP}, namely $J^{\ast}(\delta)$, becomes
\begin{align} \label{OpVal}
& J^{\ast}(\delta) := \|H g^{\ast}_{\delta} + \delta\|^2 =  \| (I_{n^2} - HH^{+}) \delta\|^2.
\end{align}
Defining $P := I_{n^2} - H H^{+}$ and noting that $P^\top P = P$ holds (since $H^{+}H = I_{mp}$ holds), \eqref{OpVal} reduces to
\begin{align} \label{ROV}
J^{\ast}(\delta) &= \delta^\top P \delta.
\end{align}
Precisely, with a bit of abuse of notation, we define $J^{\ast}(\Delta) := J^{\ast}(\mathbf{vec}(\Delta)) = J^{\ast}(\delta)$.

\section{Main Results} \label{sec:MR}

This section includes the main results of the paper. Given a known norm-bounded perturbation $\Delta$ with $0 < \| \Delta \|_F \le \rho$, we investigate the dependency of $J^{\ast}(\Delta)$ on $\Delta$ via inspecting the linear algebraic properties of $P$ in \eqref{ROV}. Proposition \ref{Prop1} analytically parameterizes the norm-bounded perturbation and proposes a closed-form formula for $J^{\ast}(\Delta)$. Proposition \ref{Prop2} elaborates on deriving sufficient conditions for the stability of the proposed updated stabilizing SOF controllers while analytically characterizing the guaranteed stability regions. Furthermore, we define a geometric metric to quantify the distance-to-instability of the proposed updated stabilizing SOF controllers. In the sequel, to save space, whenever needed, we refer to $\beta_{\mathbb{R}}(A+BFC)$ as $\beta$.

In the following lemma, we present an SVD-based parameterization of $P$ in \eqref{ROV} that facilitates parameterizing the norm-bounded perturbation $\Delta$ and subsequently proposing a closed-form expression for $J^{\ast}(\Delta)$.

\begin{mylem} \label{Lem1}
Suppose that $H = U_H \Sigma_H V_H^\top$ is the SVD of $H$. Then, $P$ in \eqref{ROV} can be parameterized as follows:
\begin{align} \label{Pform}
P &= U_H \begin{bmatrix} 0 & 0\\0 & I_{n^2 - mp} \end{bmatrix} U_H^\top
\end{align}
where $U_H = (V_C \otimes U_B)U_{\Omega}$ holds provided that $B = U_B \Sigma_B V_B^\top$, $C = U_C \Sigma_C V_C^\top$, and $\Omega := \Sigma_C^\top \otimes \Sigma_B = U_{\Omega} \Sigma_{\Omega} V_{\Omega}^\top$ denote the SVDs of $B$, $C$, and $\Omega$, respectively.
\end{mylem}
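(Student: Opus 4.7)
The plan is to obtain the two claims of the lemma separately: first the block form of $P$ from the SVD of $H$ alone, and then the explicit factorization of $U_H$ via Kronecker-product identities applied to the SVDs of $B$ and $C$.

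First I would verify that Assumption \ref{Asmp1} implies $H = C^\top \otimes B$ has full column rank $mp$. Since $\rank(X \otimes Y) = \rank(X)\rank(Y)$, we have $\rank(H) = \rank(C^\top)\rank(B) = p \cdot m$. Writing the SVD $H = U_H \Sigma_H V_H^\top$ with $U_H \in \mathbb{R}^{n^2 \times n^2}$, $V_H \in \mathbb{R}^{mp \times mp}$ unitary and $\Sigma_H \in \mathbb{R}^{n^2 \times mp}$, the full-column-rank condition means $\Sigma_H$ has the block form $[\diag(\sigma_H);\,0]$ with strictly positive diagonal. Then $H^{+} = V_H \Sigma_H^{+} U_H^\top$ with $\Sigma_H^{+} = [\diag(\sigma_H)^{-1},\,0]$, so
\begin{align*}
HH^{+} &= U_H \Sigma_H \Sigma_H^{+} U_H^\top = U_H \begin{bmatrix} I_{mp} & 0\\ 0 & 0 \end{bmatrix} U_H^\top,
\end{align*}
and subtracting from $I_{n^2} = U_H U_H^\top$ gives precisely \eqref{Pform}.

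Next I would build the explicit factorization of $U_H$. Substituting the SVDs $B = U_B\Sigma_B V_B^\top$ and $C = U_C \Sigma_C V_C^\top$ into $H$ and applying the mixed-product identity $(X_1 X_2 X_3)\otimes(Y_1 Y_2 Y_3) = (X_1\otimes Y_1)(X_2 \otimes Y_2)(X_3 \otimes Y_3)$ together with $(X \otimes Y)^\top = X^\top \otimes Y^\top$, I get
\begin{align*}
H &= (V_C \Sigma_C^\top U_C^\top) \otimes (U_B \Sigma_B V_B^\top) \\
  &= (V_C \otimes U_B)\,(\Sigma_C^\top \otimes \Sigma_B)\,(U_C \otimes V_B)^\top\\
  &= (V_C \otimes U_B)\,\Omega\,(U_C \otimes V_B)^\top.
\end{align*}
Since $V_C, U_B, U_C, V_B$ are orthogonal, so are the Kronecker products $V_C\otimes U_B$ and $U_C\otimes V_B$. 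However, $\Omega$ is only a generalized diagonal (its positive entries $\sigma_i(C)\sigma_j(B)$ are not sorted in decreasing order), so this is not yet an SVD. I would insert the SVD $\Omega = U_\Omega \Sigma_\Omega V_\Omega^\top$ to obtain
\begin{align*}
H &= \underbrace{(V_C\otimes U_B) U_\Omega}_{U_H}\, \Sigma_\Omega\, \underbrace{\bigl((U_C\otimes V_B) V_\Omega\bigr)^\top}_{V_H^\top},
\end{align*}
and since each of the bracketed factors is a product of orthogonal matrices, it is orthogonal; thus this is a valid SVD of $H$, and reading off the left factor yields $U_H = (V_C \otimes U_B) U_\Omega$, as claimed.

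The only subtle point, and what I expect to be the main obstacle to a clean presentation, is justifying that the expression above is literally \emph{an} SVD of $H$ in the standard sense, given that the Kronecker product of rectangular diagonals $\Sigma_C^\top \otimes \Sigma_B$ is not itself a sorted rectangular-diagonal matrix. The remedy is exactly the insertion of the SVD of $\Omega$: because $\Omega$ is a rearrangement of nonnegative entries, $U_\Omega$ and $V_\Omega$ act as sign/permutation-type orthogonal matrices that reorder the singular values into $\Sigma_\Omega = \Sigma_H$ in decreasing order, and the uniqueness (up to such orthogonal rearrangements within repeated singular values) of the SVD then guarantees the identification $U_H = (V_C \otimes U_B)U_\Omega$. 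Combined with the block form already established for $P$, this proves the lemma.
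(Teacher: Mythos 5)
Your proposal is correct and follows essentially the same route as the paper: compute $HH^{+}$ from the SVD of $H$ to obtain the block form of $P$, then apply the Kronecker mixed-product identity to the SVDs of $B$ and $C$ and insert the SVD of $\Omega = \Sigma_C^\top \otimes \Sigma_B$ to read off $U_H = (V_C \otimes U_B)U_\Omega$. The only cosmetic difference is that the paper expands $H^{+} = (H^\top H)^{-1}H^\top$ explicitly rather than using $H^{+} = V_H \Sigma_H^{+} U_H^\top$, and your explicit remark on why $\Omega$ must be re-decomposed (its entries are not sorted) is a point the paper handles implicitly.
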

\begin{proof}
See Appendix \ref{ApndxA}.
\end{proof}

\subsection{Norm-bounded perturbation analytic parameterization} Built upon Lemma \ref{Lem1}, we present the following proposition that analytically parameterizes the norm-bounded perturbation $\Delta$ while proposing a closed-form expression for $J^{\ast}(\Delta)$. 

\begin{myprs} \label{Prop1}
Given the norm-bounded perturbation $\Delta$ with $\|\Delta\|_F = r$ and $r \in ]0,\rho]$, and considering $r = \rho \sin (\frac{\pi \tau}{2})$ with $\tau \in ]0,1]$, the norm-bounded perturbation $\Delta$ can be parameterized as follows:
\begin{align} \label{MR}
\Delta &= \rho \sin \Big(\frac{\pi \tau}{2}\Big) U_B \mathbf{vec}^{-1}\bigg(U_{\Omega}\begin{bmatrix} \phi_c \cos(\frac{\pi \theta}{2}) \\ \phi_s \sin(\frac{\pi \theta}{2}) \end{bmatrix}\bigg) V_C^\top
\end{align}
where $\phi_c \in \mathbb{R}^{mp}$ with $\|\phi_c\| =1$, $\phi_s \in \mathbb{R}^{n^2-mp}$ with $\|\phi_s\| =1$, and $\theta \in [0,1]$, and we can compute $J^{\ast}(\Delta)$ in \eqref{ROV} as follows:
\begin{align} \label{JMR}
J^{\ast}(\Delta) &= \bigg(\rho \sin \Big(\frac{\pi \tau}{2}\Big) \sin \Big(\frac{\pi \theta}{2}\Big) \bigg)^2.
\end{align}
\end{myprs}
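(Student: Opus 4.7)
The plan is to use Lemma~\ref{Lem1} to pass to coordinates in which $P$ is diagonal, parameterize the resulting vector on a sphere of radius $r$, and then invert the Kronecker-style change of variables to recover $\Delta$ in matrix form. Throughout, I would work with $\delta = \mathbf{vec}(\Delta) \in \mathbb{R}^{n^2}$ and exploit that $U_H = (V_C \otimes U_B) U_\Omega$ from Lemma~\ref{Lem1} is orthogonal.

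First I would introduce the rotated vector $\tilde{\delta} := U_H^\top \delta$ and split it as $\tilde{\delta} = [\tilde{\delta}_1^\top,\, \tilde{\delta}_2^\top]^\top$ with $\tilde{\delta}_1 \in \mathbb{R}^{mp}$ and $\tilde{\delta}_2 \in \mathbb{R}^{n^2-mp}$. Since $U_H$ is orthogonal, $\|\tilde{\delta}\| = \|\delta\| = \|\Delta\|_F = r$. Substituting the form of $P$ from Lemma~\ref{Lem1} into \eqref{ROV} gives
\begin{align*}
J^{\ast}(\Delta) \;=\; \delta^\top P\, \delta \;=\; \tilde{\delta}^\top \begin{bmatrix} 0 & 0 \\ 0 & I_{n^2-mp} \end{bmatrix} \tilde{\delta} \;=\; \|\tilde{\delta}_2\|^2.
\end{align*}

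Next I would parameterize $\tilde{\delta}$ on the sphere of radius $r$ by writing $\|\tilde{\delta}_1\| = r \cos(\pi\theta/2)$ and $\|\tilde{\delta}_2\| = r \sin(\pi\theta/2)$ for some $\theta \in [0,1]$ (which covers every non-negative split of $r^2$ into two squared magnitudes), and then setting $\tilde{\delta}_1 = r\cos(\pi\theta/2)\,\phi_c$ and $\tilde{\delta}_2 = r\sin(\pi\theta/2)\,\phi_s$ with unit vectors $\phi_c \in \mathbb{R}^{mp}$ and $\phi_s \in \mathbb{R}^{n^2-mp}$. Using $r = \rho \sin(\pi\tau/2)$ immediately yields \eqref{JMR} since $J^{\ast}(\Delta) = \|\tilde{\delta}_2\|^2 = (\rho\sin(\pi\tau/2)\sin(\pi\theta/2))^2$.

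Finally I would invert the change of variables to recover $\Delta$. From $\tilde{\delta} = U_H^\top \delta$ and $U_H = (V_C \otimes U_B) U_\Omega$, we have $\delta = (V_C \otimes U_B)\, U_\Omega\, \tilde{\delta}$. Applying the identity $\mathbf{vec}(X Y Z) = (Z^\top \otimes X)\mathbf{vec}(Y)$ with $X = U_B$, $Z = V_C^\top$, and $Y = \mathbf{vec}^{-1}(U_\Omega \tilde{\delta})$ gives
\begin{align*}
\Delta \;=\; \mathbf{vec}^{-1}(\delta) \;=\; U_B\, \mathbf{vec}^{-1}(U_\Omega\, \tilde{\delta})\, V_C^\top,
\end{align*}
and substituting the explicit $\tilde{\delta}$ together with $r = \rho\sin(\pi\tau/2)$ produces \eqref{MR}. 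The main obstacle is bookkeeping: one must check that $V_C \otimes U_B$ is indeed orthogonal (which follows from orthogonality of $V_C$ and $U_B$ together with $(X \otimes Y)^\top (X \otimes Y) = X^\top X \otimes Y^\top Y$), and that the dimensions of the two blocks $\tilde{\delta}_1, \tilde{\delta}_2$ are exactly $mp$ and $n^2-mp$ as dictated by Lemma~\ref{Lem1}; everything else is routine once the rotated coordinates are set up.
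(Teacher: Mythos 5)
Your proposal is correct and follows essentially the same route as the paper's proof in Appendix~\ref{ApndxB}: rotate $\delta$ by $U_H^\top$, split the rotated vector into its $mp$- and $(n^2-mp)$-dimensional blocks, parameterize the split of $r^2$ via $\theta$ and unit vectors $\phi_c,\phi_s$, and invert the Kronecker change of variables with $\mathbf{vec}(XYZ)=(Z^\top\otimes X)\mathbf{vec}(Y)$ to obtain \eqref{MR} and \eqref{JMR}. The only differences are notational ($\tilde{\delta}$ in place of the paper's $\chi = r\psi$) and that you make the final $\mathbf{vec}^{-1}$ step slightly more explicit.
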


\begin{proof}
See Appendix \ref{ApndxB}.
\end{proof}
The following corollary provides an alternative formula to compute $G^{\ast}_{\Delta}$ in \eqref{Osol}.
\begin{mycor}
Considering the following identities:
\begin{align*} 
& (U_H,\Sigma_H,V_H) = ((V_C \otimes U_B)U_{\Omega},\Sigma_{\Omega},(U_C \otimes V_B)V_{\Omega})\\
& B = U_B \Sigma_B V_B^\top, C = U_C \Sigma_C V_C^\top, \Sigma_C^\top \otimes \Sigma_B = U_{\Omega} \Sigma_{\Omega} V_{\Omega}^\top
\end{align*}
\eqref{Osol} can alternatively be computed as follows:
\begin{align*}
& G^{\ast}_{\Delta} = -\mathbf{vec}^{-1}(V_H \begin{bmatrix} (\begin{bmatrix} I_{mp} & 0 \end{bmatrix}\Sigma_H)^{-1} & 0 \end{bmatrix} U_H^\top \mathbf{vec}(\Delta)).
\end{align*}
\end{mycor}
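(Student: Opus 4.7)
My plan is to reduce the claimed alternative expression to the pseudoinverse formula already derived in the paper. Combining \eqref{goptsol} with the Kronecker identity $(C^{\top} \otimes B)^{+} = C^{\top +} \otimes B^{+}$ (valid under Assumption~\ref{Asmp1}), the optimal vectorized solution can be written as $g^{\ast}_{\delta} = -H^{+} \delta$ with $H = C^{\top} \otimes B$. Hence $G^{\ast}_{\Delta} = -\mathbf{vec}^{-1}(H^{+}\mathbf{vec}(\Delta))$, and it is enough to show that the bracketed operator in the corollary coincides with $H^{+}$.

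To do this I would use the SVD $H = U_H \Sigma_H V_H^{\top}$ supplied by Lemma~\ref{Lem1}. Since $B$ has full column rank and $C$ has full row rank, $H = C^{\top} \otimes B \in \mathbb{R}^{n^2 \times mp}$ inherits full column rank, so all $mp$ singular values are strictly positive. Therefore the rectangular $\Sigma_H$ admits the block form $\Sigma_H = \bmat{D \\ 0}$, where $D$ is the $mp \times mp$ diagonal matrix of positive singular values. The standard Moore-Penrose SVD formula then gives $H^{+} = V_H \Sigma_H^{+} U_H^{\top}$ with $\Sigma_H^{+} = \bmat{D^{-1} & 0}$.

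The final step is to rewrite $D$ and $D^{-1}$ in the form stated in the corollary. Left-multiplying $\Sigma_H$ by the $mp \times n^2$ row selector $\bmat{I_{mp} & 0}$ extracts exactly the nonzero diagonal block, i.e., $\bmat{I_{mp} & 0}\Sigma_H = D$, so $\big(\bmat{I_{mp} & 0}\Sigma_H\big)^{-1} = D^{-1}$ and $\Sigma_H^{+} = \bmat{(\bmat{I_{mp} & 0}\Sigma_H)^{-1} & 0}$. Substituting into $H^{+} = V_H \Sigma_H^{+} U_H^{\top}$ and then into $G^{\ast}_{\Delta} = -\mathbf{vec}^{-1}(H^{+}\mathbf{vec}(\Delta))$ yields the claimed identity. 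The only mildly subtle point, which I view as the main thing to get right, is the dimension bookkeeping of the two zero blocks (one inside the inverse, one outside) so that the selector $\bmat{I_{mp} & 0}$ has width $n^2$ while the trailing $0$ in $\Sigma_H^{+}$ has width $n^2 - mp$; once these sizes are set, the rest is a direct substitution using the factorization $(U_H, \Sigma_H, V_H) = ((V_C \otimes U_B)U_{\Omega},\Sigma_{\Omega},(U_C \otimes V_B)V_{\Omega})$ already established in Lemma~\ref{Lem1}.
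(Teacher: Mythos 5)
Your proposal is correct, and it follows exactly the route the paper implies: the corollary is stated without an explicit proof, but it is meant to follow from $g^{\ast}_{\delta}=-H^{+}\delta$ in \eqref{goptsol} together with the SVD factorization $(U_H,\Sigma_H,V_H)=((V_C\otimes U_B)U_{\Omega},\Sigma_{\Omega},(U_C\otimes V_B)V_{\Omega})$ established in Lemma \ref{Lem1}, via the standard identity $H^{+}=V_H\Sigma_H^{+}U_H^{\top}$ with $\Sigma_H^{+}=\bigl[\,(\bigl[\,I_{mp}\;\;0\,\bigr]\Sigma_H)^{-1}\;\;0\,\bigr]$ for the full-column-rank $H$. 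Your dimension bookkeeping for the two zero blocks is also consistent with the paper's convention $\Sigma_H=\bigl[\,\Gamma_H^{\top}\;\;0\,\bigr]^{\top}$ used in Appendix \ref{ApndxA}, so nothing is missing.
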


Fig. \ref{Fig0} depicts the dependency of $\frac{J^{\ast}(\Delta)}{\rho^2}$ on $\tau$ and $\theta$. As expected, since functions $\sin(\frac{\pi \tau}{2})$ and $\sin(\frac{\pi \theta}{2})$ have monotonic behaviors versus $\tau$ (for $\tau \in ]0,1]$) and $\theta$ (for $\theta \in [0,1]$), respectively, the smaller $\tau$ and/or $\theta$, the smaller $\frac{J^{\ast}(\Delta)}{\rho^2}$ we get. Note that the smaller value of $\frac{J^{\ast}(\Delta)}{\rho^2}$ is equivalent to the higher chance of satisfaction of the sufficient stability condition \eqref{ImpIneq}. In other words, its intuitive interpretation is that handling a less severe perturbation via an updated stabilizing SOF controller $F + G^{\ast}_{\Delta}$ with $G^{\ast}_{\Delta}$ in \eqref{Osol} is easier.

\begin{figure}[t]
\centering
\includegraphics[scale=0.4]{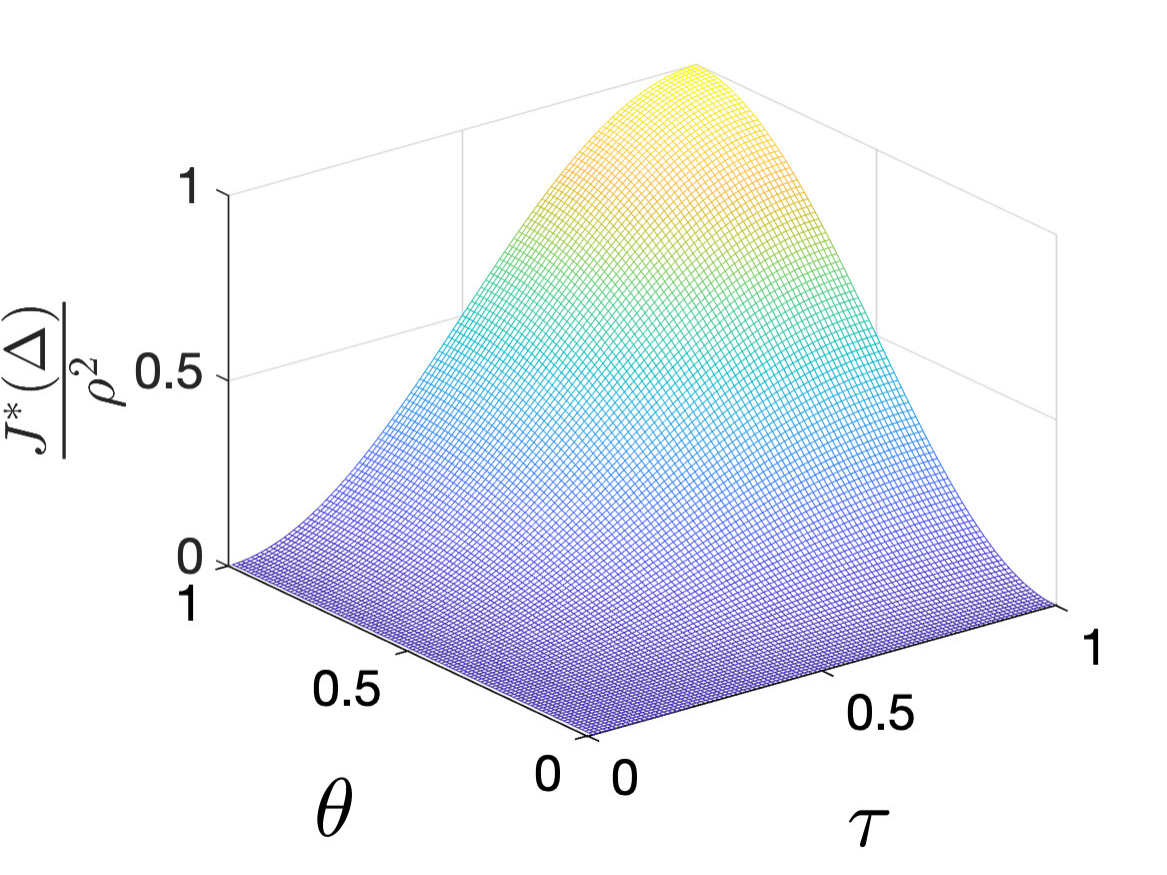}
\caption{Dependency of $\frac{J^{\ast}(\Delta)}{\rho^2}$ on $\tau$ and $\theta$.} \label{Fig0}
\end{figure}

\subsection{Guaranteed stability region analytic characterization} We state the following proposition that derives sufficient conditions on the stability of the proposed updated stabilizing SOF controllers while analytically characterizing the guaranteed stability regions.

\begin{myprs} \label{Prop2}
Given the norm-bounded perturbation $\Delta$ parameterized by \eqref{MR}, $F + G^{\ast}_{\Delta}$ with $G^{\ast}_{\Delta}$ in \eqref{Osol} is an updated stabilizing SOF controller
\begin{enumerate}
\item if $\rho < \beta_{\mathbb{R}}(A+BFC)$ holds; 
\item else if $\rho \ge \beta_{\mathbb{R}}(A+BFC)$ and $(\tau_{\Delta},\theta_{\Delta}) \in S_{\kappa}$ hold where the guaranteed stability region $S_{\kappa}$ is defined as:\begin{subequations} \label{SSS}
\begin{align}
S_{\kappa} &:= \check{S} \cup \tilde{S}\\
\check{S} &:= \{(\tau,\theta): \tau \in ]0,\kappa[, \theta \in [0,1]\}\\
\kappa &:= \frac{2}{\pi} \arcsin\Big(\frac{\beta_{\mathbb{R}}(A+BFC)}{\rho}\Big)\\
\tilde{S} &:= \{(\tau,\theta): \tau \in [\kappa,1], \theta \in [0,\zeta_{\tau,\kappa}[\}\\
\zeta_{\tau,\kappa} &:= \frac{2}{\pi}\arcsin\Big(\frac{\sin (\frac{\pi \kappa}{2})}{\sin (\frac{\pi \tau}{2})}\Big).
\end{align}
\end{subequations}
Moreover, the following geometric metric provides a percentage-based lower bound on the stability of the updated perturbed state-space \eqref{RDPerturbedSD}:
\begin{align} \label{AreaM}
\xi_{\kappa}~(\%) &:= 100 \times \bigg ( \kappa + \int_{\kappa}^{1} \zeta_{\tau,\kappa} d \tau \bigg )
\end{align}
and $\xi_{\kappa}$ is an increasing function of $\kappa$ (equivalently $\xi_{\rho}$ is a decreasing function of $\rho$ for a fixed $\beta_{\mathbb{R}}(A+BFC)$ and an increasing function of $\beta_{\mathbb{R}}(A+BFC)$ for a fixed $\rho$).
\end{enumerate}
\end{myprs}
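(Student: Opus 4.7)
The plan is to reduce both parts of the claim to a single scalar inequality and then analyze it over the parameter box $]0,1]\times[0,1]$. By Proposition~\ref{Prop1}, $\|BG^{\ast}_{\Delta} C+\Delta\|_F^2 = J^{\ast}(\Delta) = (\rho\sin(\pi\tau/2)\sin(\pi\theta/2))^2$, so the sufficient stability condition \eqref{ImpIneq} collapses to
\begin{align*}
\rho\,\sin(\pi\tau_{\Delta}/2)\sin(\pi\theta_{\Delta}/2) \;<\; \beta_{\mathbb{R}}(A+BFC).
\end{align*}
Everything downstream is a study of when this inequality holds.

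Part (i) is immediate: both sine factors lie in $[0,1]$, so the left-hand side is bounded by $\rho$; whenever $\rho<\beta_{\mathbb{R}}(A+BFC)$ the inequality holds for every admissible $(\tau_{\Delta},\theta_{\Delta})$, giving unconditional stability.

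For part (ii) I assume $\rho\ge\beta_{\mathbb{R}}(A+BFC)$, set $\mu:=\beta_{\mathbb{R}}(A+BFC)/\rho\in\,]0,1]$, and take $\kappa=(2/\pi)\arcsin\mu$ so that $\sin(\pi\kappa/2)=\mu$. I split on $\tau_{\Delta}$. If $\tau_{\Delta}\in\,]0,\kappa[$, then $\sin(\pi\tau_{\Delta}/2)<\mu$ and the bound holds for every $\theta_{\Delta}\in[0,1]$, producing the rectangular slab $\check{S}$. If $\tau_{\Delta}\in[\kappa,1]$, the inequality rearranges to $\sin(\pi\theta_{\Delta}/2)<\mu/\sin(\pi\tau_{\Delta}/2)$; since the right side lies in $]0,1]$, applying the strictly increasing map $(2/\pi)\arcsin(\cdot)$ on $[0,1]$ gives $\theta_{\Delta}<\zeta_{\tau_{\Delta},\kappa}$, producing the curved slab $\tilde{S}$. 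Taking the union yields $S_{\kappa}$.

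The step I expect to require the most care is the monotonicity of $\xi_{\kappa}$ in $\kappa$. A direct Leibniz-rule attack is awkward because $\partial_{\kappa}\zeta_{\tau,\kappa}$ blows up as $\tau\to\kappa^{+}$ and the boundary term $1-\zeta_{\kappa,\kappa}$ must be handled delicately. My plan is to bypass differentiation entirely by a geometric identification: writing
\begin{align*}
S_{\kappa} \;=\; \{(\tau,\theta)\in\,]0,1]\times[0,1]\,:\ \sin(\pi\tau/2)\sin(\pi\theta/2)<\sin(\pi\kappa/2)\},
\end{align*}
the expression \eqref{AreaM} is, up to the factor $100$, exactly the Lebesgue area of $S_{\kappa}$ --- the first summand is the area of $]0,\kappa[\times[0,1]$ and the second is the area under the curve $\theta=\zeta_{\tau,\kappa}$ over $[\kappa,1]$, the two pieces gluing consistently because $\zeta_{\kappa,\kappa}=(2/\pi)\arcsin 1=1$. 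Since the defining strict inequality of $S_{\kappa}$ weakens as $\kappa$ grows, $S_{\kappa_1}\subsetneq S_{\kappa_2}$ whenever $\kappa_1<\kappa_2$ (the intermediate sliver where $\sin(\pi\kappa_1/2)\le\sin(\pi\tau/2)\sin(\pi\theta/2)<\sin(\pi\kappa_2/2)$ has positive measure), so $\xi_{\kappa_1}<\xi_{\kappa_2}$. The stated monotonicity in $\rho$ and in $\beta_{\mathbb{R}}(A+BFC)$ follows by composing with $\kappa=(2/\pi)\arcsin(\beta_{\mathbb{R}}(A+BFC)/\rho)$, which is increasing in $\beta_{\mathbb{R}}(A+BFC)$ and decreasing in $\rho$.
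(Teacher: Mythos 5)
Your proposal is correct, and the first two-thirds of it mirrors the paper's own argument: the paper likewise substitutes $J^{\ast}(\Delta)$ from Proposition~\ref{Prop1} into \eqref{ImpIneq} to get $\sin(\pi\tau/2)\sin(\pi\theta/2)<\beta_{\mathbb{R}}(A+BFC)/\rho$, disposes of case (i) by noting the left side is at most $1$, and obtains $\check{S}$ and $\tilde{S}$ by the same split on whether $\tau$ is below or above $\kappa$. Where you genuinely diverge is the monotonicity of $\xi_{\kappa}$: the paper differentiates \eqref{AreaM} with the Leibniz integral rule, obtaining the explicit expression \eqref{Der} (the boundary term $1-\zeta_{\kappa,\kappa}$ vanishing because $\zeta_{\kappa,\kappa}=1$) and concluding from nonnegativity of the integrand, whereas you identify $\xi_{\kappa}/100$ as the Lebesgue area of the sublevel set $\{\sin(\pi\tau/2)\sin(\pi\theta/2)<\sin(\pi\kappa/2)\}$ and invoke monotonicity of measure on the nested family, with a positive-measure sliver giving strictness. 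Your route is more elementary and sidesteps the integrable singularity of the integrand at $\tau=\kappa^{+}$ that you correctly flagged as the delicate point of a differentiation-based attack; the paper's route pays that price but the resulting closed-form derivative is reused immediately afterward in Corollary~\ref{Coro2} to compute the limits of $d\xi_{\rho}/d\rho$ and $d\xi_{\beta}/d\beta$, which your set-inclusion argument does not yield. Both proofs are sound, and your nestedness observation in fact delivers strict monotonicity slightly more cleanly than the paper's $\frac{d\xi_{\kappa}}{d\kappa}\ge 0$ statement.
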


\begin{proof}
See Appendix \ref{ApndxC}.
\end{proof}

For the case of $\rho < \beta_{\mathbb{R}}(A+BFC)$, the guaranteed stability region would be $]0,1] \times [0,1] = S_{\kappa}|_{\kappa = 1} \cup \{ (1,1) \}$, i.e., the unit square in the non-negative quadrant of $(\tau,\theta)$. For the sake of notation simplicity, we define $\mathbb{S} = ]0,1] \times [0,1]$ and utilize the unified notation of $S$ to refer to both guaranteed stability regions $S_{\kappa}$ and $\mathbb{S}$. The following corollary thoroughly sheds light on the dependency and limiting behaviors of $\xi_{\rho}$ and $\xi_{\beta}$ on $\rho$ and $\beta$, respectively.
\begin{mycor} \label{Coro2}
For the case of $\rho \ge \beta_{\mathbb{R}}(A+BFC)$, considering the following expression for $\xi_{\rho}$:
\begin{align*}
\xi_{\rho} &= \frac{2}{\pi} \arcsin\Big(\frac{\beta}{\rho}\Big) + \frac{2}{\pi} \int_{\frac{2}{\pi} \arcsin\big(\frac{\beta}{\rho}\big)}^{1} \arcsin\Big(\frac{\beta}{\rho \sin (\frac{\pi \tau}{2})}\Big) d\tau 
\end{align*}
we compute the derivative of $\xi_{\rho}$ with respect to $\rho$ as follows:
\begin{align} \label{Deri}
\frac{d \xi_{\rho}}{d\rho} &= -\frac{2}{\pi \rho} \int_{\frac{2}{\pi} \arcsin\big(\frac{\beta}{\rho}\big)}^{1} \frac{\beta}{\rho \sqrt{\sin (\frac{\pi \tau}{2})^2 - (\frac{\beta}{\rho})^2}} d \tau.
\end{align}
Moreover, as $\rho$ tends to $\beta$ and $\infty$ in \eqref{Deri}, we get
\begin{align*}
& \lim_{\rho \to \beta^+} \frac{d \xi_{\rho}}{d\rho} = -\frac{2}{\pi \beta}, \lim_{\rho \to \infty} \frac{d \xi_{\rho}}{d\rho} = 0, \lim_{\rho \to \beta^+} \xi_{\rho} = 1, \\ & \lim_{\rho \to \infty} \xi_{\rho} = 0.
\end{align*}
Similarly, considering the following expression for $\xi_{\beta}$:
\begin{align*}
\xi_{\beta} &= \frac{2}{\pi} \arcsin\Big(\frac{\beta}{\rho}\Big) + \frac{2}{\pi} \int_{\frac{2}{\pi} \arcsin\big(\frac{\beta}{\rho}\big)}^{1} \arcsin\Big(\frac{\beta}{\rho \sin (\frac{\pi \tau}{2})}\Big) d\tau
\end{align*}
we compute the derivative of $\xi_{\beta}$ with respect to $\beta$ as follows:
\begin{align} \label{Deri2}
\frac{d \xi_{\beta}}{d\beta} &= \frac{2}{\pi \rho} \int_{\frac{2}{\pi} \arcsin\big(\frac{\beta}{\rho}\big)}^{1} \frac{1}{\sqrt{\sin (\frac{\pi \tau}{2})^2 - (\frac{\beta}{\rho})^2}} d \tau.
\end{align}
Moreover, by tending $\beta$ to $0$ and $\rho$ in \eqref{Deri2}, we get
\begin{align*}
& \lim_{\beta \to 0^+} \frac{d \xi_{\beta}}{d\beta} = \infty, \lim_{\beta \to \rho^{-}} \frac{d \xi_{\beta}}{d\beta} = \frac{2}{\pi \rho}, \lim_{\beta \to 0^+} \xi_{\beta} = 0,\\ & \lim_{\beta \to \rho^{-}} \xi_{\beta} = 1.
\end{align*}
\end{mycor}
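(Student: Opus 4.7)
The plan is to prove the corollary in three stages: deriving the integral expressions for $\xi_\rho$ and $\xi_\beta$ by direct substitution, computing their derivatives via the Leibniz rule under the integral sign, and evaluating the four limits through a change of variables that exposes an elliptic-type structure.

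Starting from $\xi_\kappa = \kappa + \int_\kappa^1 \zeta_{\tau,\kappa}\, d\tau$ in Proposition~\ref{Prop2}, I would substitute $\kappa = \frac{2}{\pi}\arcsin(\beta/\rho)$ (so that $\sin(\pi\kappa/2) = \beta/\rho$) into $\zeta_{\tau,\kappa} = \frac{2}{\pi}\arcsin(\sin(\pi\kappa/2)/\sin(\pi\tau/2))$. This immediately recovers both stated integral expressions, which are identical in form but differ in which variable is held fixed. For the derivatives, both the lower limit $a := \frac{2}{\pi}\arcsin(\beta/\rho)$ and the integrand $f(\tau) := \frac{2}{\pi}\arcsin(\beta/(\rho\sin(\pi\tau/2)))$ depend on the variable of differentiation, so the Leibniz rule yields $\frac{d\xi}{d(\cdot)} = a'(\cdot) - a'(\cdot)\,f(a) + \int_a^1 \partial f/\partial(\cdot)\, d\tau$. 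The crucial cancellation is $f(a) = \frac{2}{\pi}\arcsin(1) = 1$, since $\sin(\pi a/2) = \beta/\rho$; hence $d\xi/d(\cdot)$ reduces to the integral of the partial derivative alone. Routine differentiation of the $\arcsin$ composite, followed by factoring $\rho$ out of $\sqrt{\rho^2\sin^2(\pi\tau/2)-\beta^2}$, then reproduces exactly \eqref{Deri} and \eqref{Deri2}.

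For the limits of $\xi_\rho$ and $\xi_\beta$ themselves, the bound $\zeta_{\tau,\kappa}\le 1$ and dominated convergence (combined with $a \to 0$ or $a \to 1$ at the relevant boundary) drive the integral term to zero, while the prefactor $\frac{2}{\pi}\arcsin(\beta/\rho)$ tends to $0$ or $1$. For the derivative limits, I would apply the substitution $u = \sin(\pi\tau/2)$ to convert the core integral into $\int_{\beta/\rho}^1 du/\sqrt{(u^2-(\beta/\rho)^2)(1-u^2)}$, an elliptic-type integral with singularities at both endpoints. In the regime $\rho \to \beta^+$ (or $\beta \to \rho^-$), a local rescaling $u = (\beta/\rho) + (1-\beta/\rho)t$, $t\in[0,1]$, shows the integral approaches $\int_0^1 dt/(2\sqrt{t(1-t)}) = \pi/2$, and combining with the prefactors produces the stated limits $-2/(\pi\beta)$ and $2/(\pi\rho)$. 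In the regime $\rho \to \infty$ or $\beta \to 0^+$, the integrand near $u = \beta/\rho \to 0$ behaves like $1/u$, yielding a logarithmic divergence; this divergence is suppressed by the $\beta/\rho^2$ factor present in $d\xi_\rho/d\rho$ (giving zero) but not by the $1/\rho$ factor in $d\xi_\beta/d\beta$ (giving infinity).

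The main obstacle is the asymptotic analysis at $\rho \to \beta^+$: the integration domain shrinks to a single point while the integrand simultaneously diverges at both endpoints, and only the precise coupling of these two effects produces a finite nonzero limit. The local-rescaling substitution above is the clean way to disentangle the two effects and reveal the underlying beta-integral $\int_0^1 dt/\sqrt{t(1-t)}$; without such a substitution, naive estimates give either zero or infinity. Once that limit is obtained, the remaining limits require only a routine logarithmic asymptotic near the small-$u$ singularity, so the elliptic integral is really the crux of the argument.
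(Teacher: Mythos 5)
Your proposal is correct. The paper itself gives no proof of Corollary~\ref{Coro2}; the closest thing is Appendix~\ref{ApndxC}, where $\tfrac{d\xi_\kappa}{d\kappa}$ in \eqref{Der} is obtained by exactly your mechanism --- Leibniz's rule with the boundary contributions cancelling because the integrand equals $\tfrac{2}{\pi}\arcsin(1)=1$ at the lower limit --- and \eqref{Deri}--\eqref{Deri2} then follow either by your direct differentiation in $\rho$ and $\beta$ or, equivalently, by the chain rule through $\tfrac{d\kappa}{d\rho}$ and $\tfrac{d\kappa}{d\beta}$ computed at the end of that appendix. Your substitution $u=\sin(\pi\tau/2)$ and the local rescaling near $u=\beta/\rho$, which resolve the competing shrinking-domain/diverging-integrand effects at $\rho\to\beta^{+}$ and produce the beta integral $\int_0^1 dt/\sqrt{t(1-t)}=\pi$, is precisely the detail the paper omits, and your asymptotics for all eight limits check out.
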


Fig. \ref{Fig1} visualizes the guaranteed stability region $S_{\kappa}$ for $\kappa = \frac{1}{3}$ and the percentage-based lower bounds on the stability of the updated perturbed state-space \eqref{RDPerturbedSD} versus $\kappa$, $\rho$, and $\beta$. As expected, the numerical observations of Fig. \ref{Fig1} are consistent with the theoretical results of Proposition \ref{Prop2} and Corollary \ref{Coro2}. Precisely, as $\kappa$ decreases, e.g., for an increased perturbation upper bound $\rho$ or a decreased MDRP $\beta$, the percentage-based lower bound on the stability of the updated perturbed state-space \eqref{RDPerturbedSD} $\xi~(\%)$ degrades, which is expected. As Fig. \ref{Fig1} (Top-Left) depicts, for the sufficiently large values of $\tau$ and/or $\theta$, i.e., more severe perturbations, $(\tau,\theta)$ lies outside the $S_{\kappa}$ and there is no stability guarantee for the proposed updated SOF controller which is aligned with the expectations around the negative impacts of perturbations on the stability. 

\begin{figure}[t]
\centering
\subfloat[a][]{\includegraphics[scale=0.225]{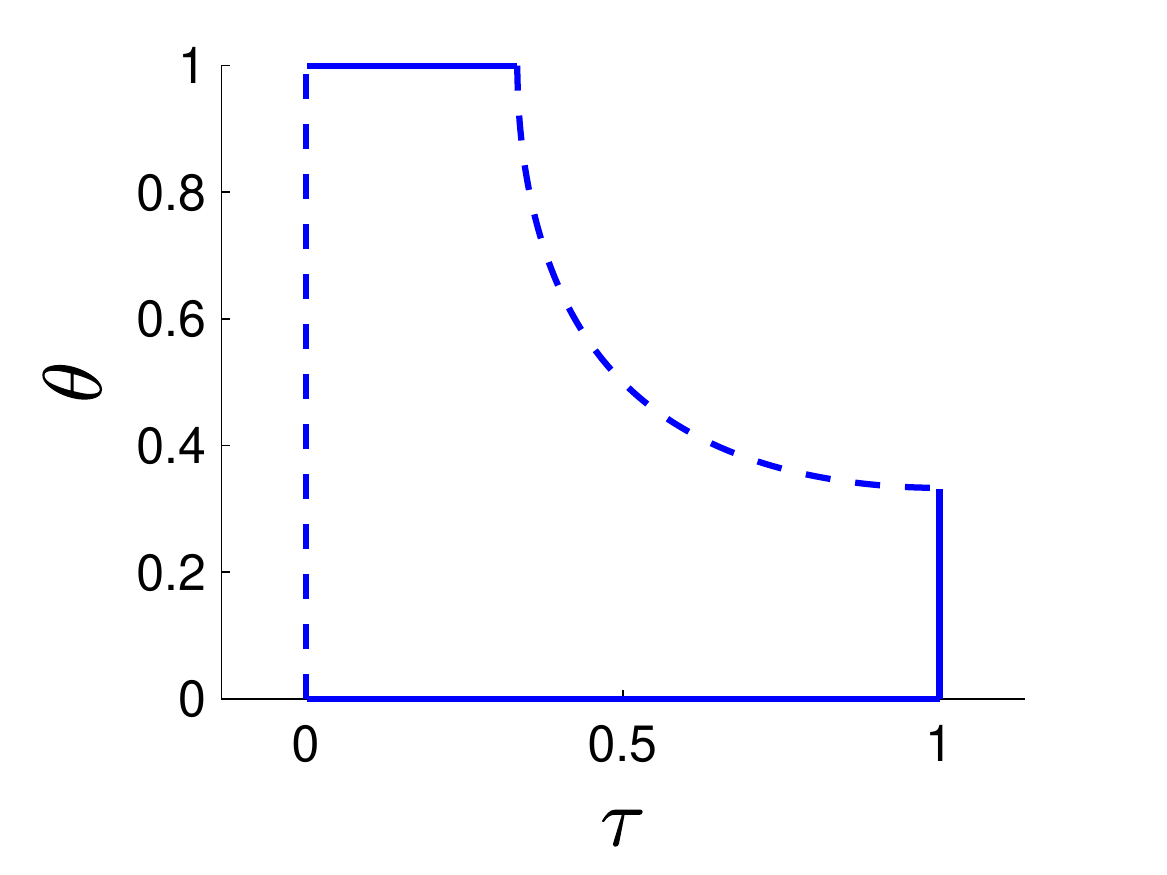}\label{<figure1>}}
\subfloat[b][]{\includegraphics[scale=0.225]{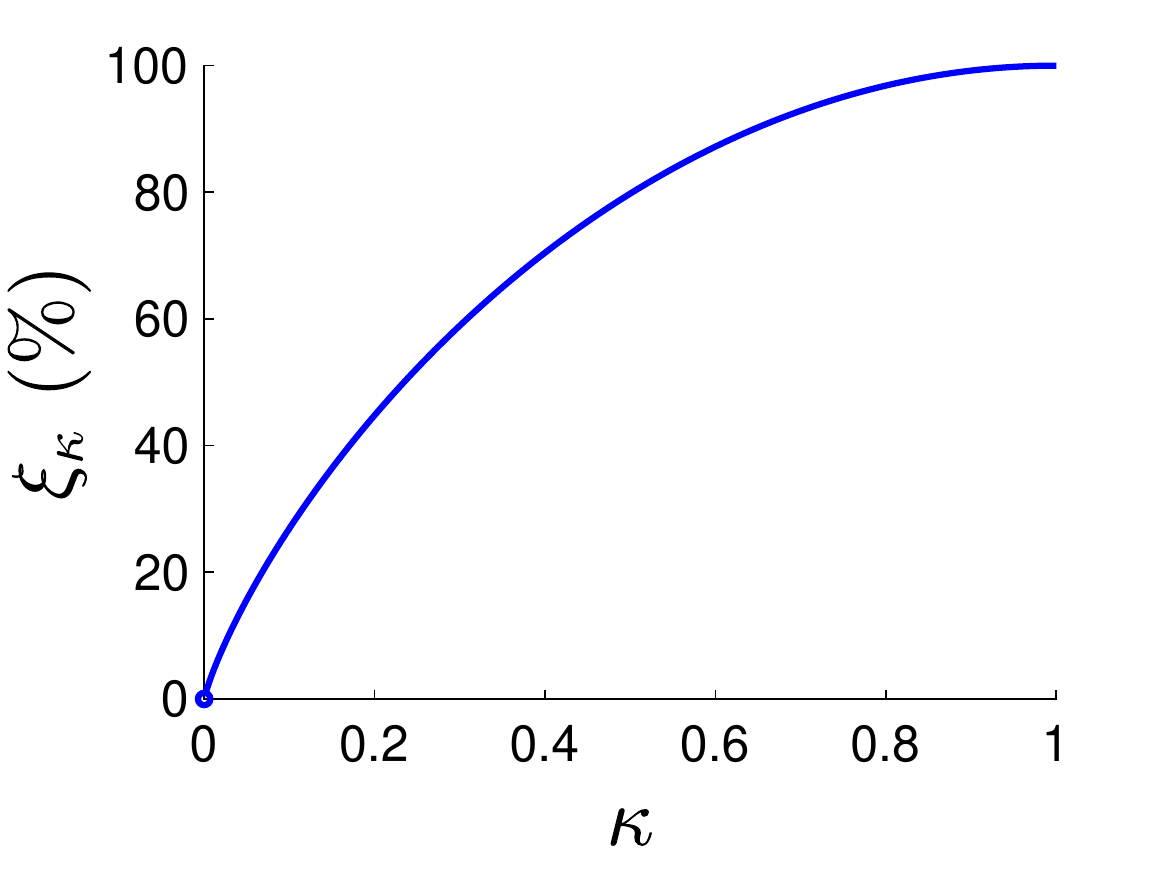}\label{<figure2>}}

\subfloat[c][]{\includegraphics[scale=0.225]{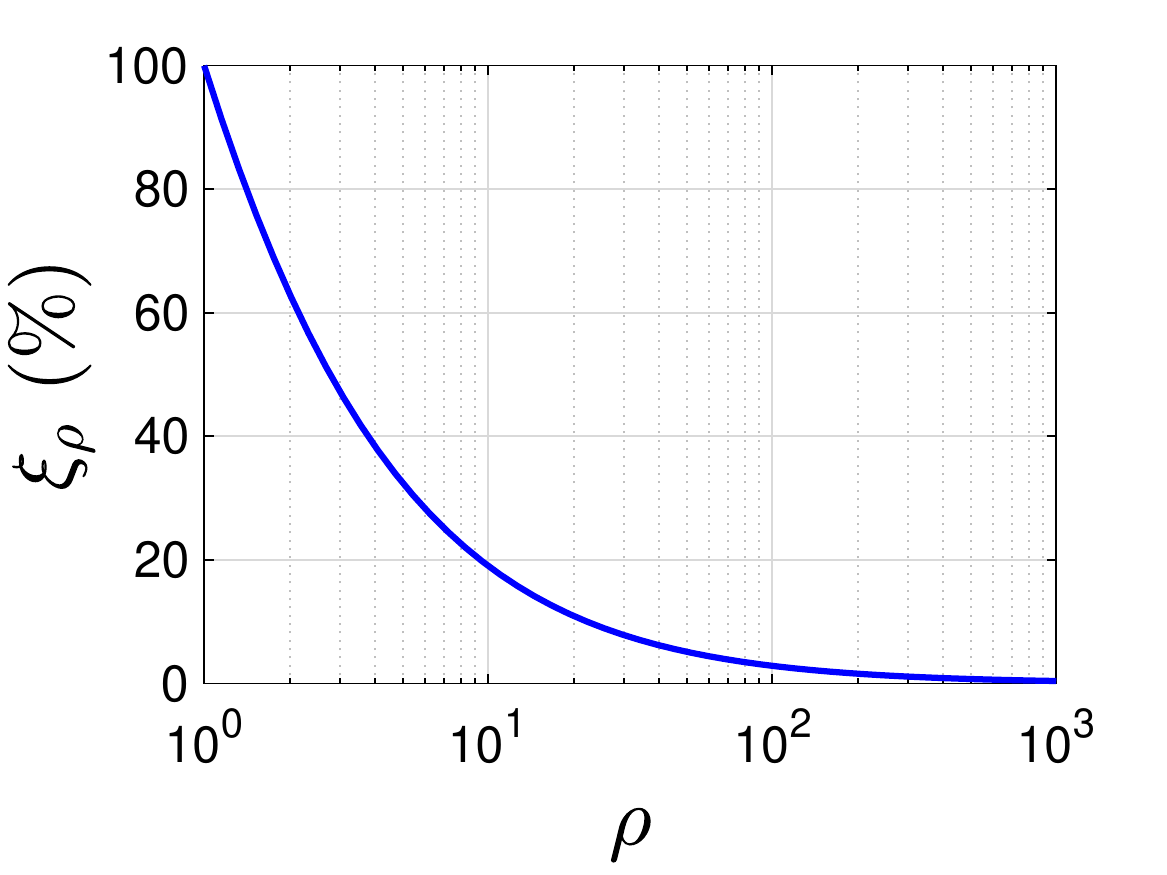}\label{<figure3>}}
\subfloat[d][]{\includegraphics[scale=0.225]{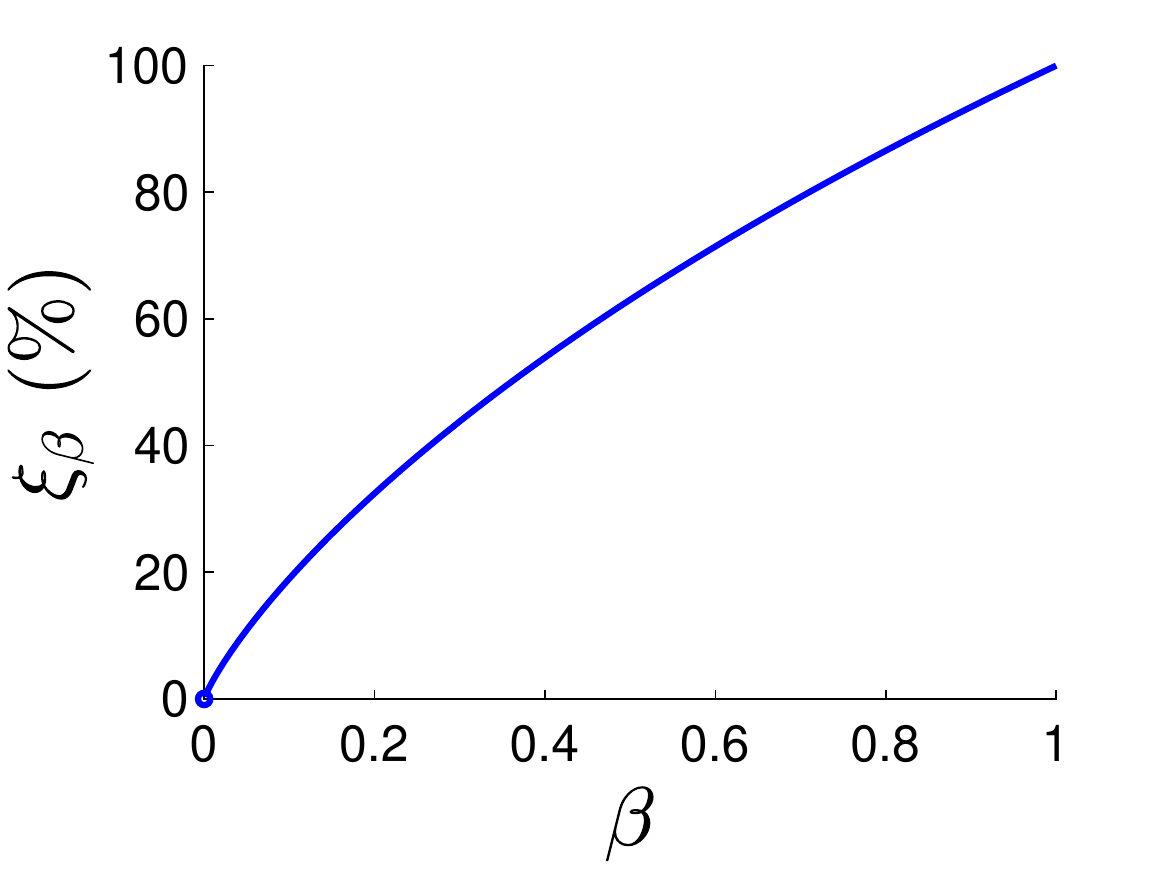}\label{<figure4>}}

\caption{(a) The guaranteed stability region $S_{\kappa}$ for $\kappa = \frac{1}{3}$, (b) the percentage-based lower bound on the stability of the updated perturbed state-space \eqref{RDPerturbedSD} $\xi_{\kappa}~(\%)$ versus $\kappa$, (c) the percentage-based lower bound on the stability of the updated perturbed state-space \eqref{RDPerturbedSD} $\xi_{\rho}~(\%)$ versus $\rho$ for $\beta = 1$, and (d) the percentage-based lower bound on the stability of the updated perturbed state-space \eqref{RDPerturbedSD} $\xi_{\beta}~(\%)$ versus $\beta$ for $\rho = 1$.} \label{Fig1}
\end{figure}

\section{Numerical Simulations} \label{sec:Example}

First, we elaborate on the computation of the MDRP as it constitutes the core part of Procedure \ref{proced}, and the performance of Procedure \ref{proced} is significantly affected by the computational accuracy/efficiency of the MDRP. Then, to assess the effectiveness of the theoretical results, we consider two scenarios: (\textit{i}) a toy example with an SOF controller benchmark collected by \cite{leibfritz2006compleib}, (\textit{ii}) two power system examples with standard benchmarks from \cite{bazr2017,bazrafshan2018coupling,11010910}. We will use Procedure \ref{proced} to compute the updated stabilizing SOF controller $F^{\text{updated}}$ from a nominal stabilizing SOF controller $F^{\text{nominal}}$.

\setlength{\floatsep}{5pt}{
\begin{algorithm}[!ht]
\caption{\textbf{A novel update of a nominal stabilizing SOF controller}}\label{proced}
\DontPrintSemicolon
\textbf{Input}: $\Delta$, $B$, $C$, $F^{\text{nominal}}$.

Compute $G^{\ast}_{\Delta}$ via $G^{\ast}_{\Delta} = -B^{+} \Delta ({C}^{\top +})^\top$ in \eqref{Osol}.

Compute $F^{\text{updated}}$ via $F^{\text{updated}} = F^{\text{nominal}} + G^{\ast}_{\Delta}$.

\textbf{Output}: $F^{\text{updated}}$.

\end{algorithm}}

\subsection{On the computation of the MDRP}

On the one hand, as mentioned earlier in the paper, computing the exact value of MDRP $\beta$ is computationally intractable (NP-hard). On the other hand, the lower bound on the MDRP $\beta$ in \eqref{CSR} can be employed to provide a guaranteed stability region, but it is more conservative. Such limitations motivate us to use the following (heuristic) optimization problem:
\begin{align} \label{OPaux}
\underset{v \in \mathbb{R}^{n^2},~\beta \in \mathbb{R}_{++}}{\max}\alpha \bigg(A+BF^{\text{nominal}}C + \beta \mathbf{vec}^{-1}\bigg(\frac{v}{\|v\|}\bigg)\bigg)
\end{align}
along with a specialized bisection method (fixing the value of $\beta$ and solving for a $v \in \mathbb{R}^{n^2}$), to obtain a near-optimal value of $\beta$. The philosophy behind the construction of such a heuristic optimization problem \eqref{OPaux} is that by definition we have ($(3.2)$ in \cite{van1984near})
\begin{align} \label{betacomp}
    & \beta_{\mathbb{R}}(A+BF^{\text{nominal}}C) := \notag\\ 
    & \min \{\|\mathcal{X}\|_F: \alpha(A+BF^{\text{nominal}}C + \mathcal{X}) = 0, \mathcal{X} \in \mathbb{R}^{n \times n} \}
\end{align}or equivalently, for all $\mathcal{X} \in \mathbb{R}^{n \times n}$ satisfying the inequality
\begin{align*}
    & \|\mathcal{X}\|_F < \beta_{\mathbb{R}}(A+BF^{\text{nominal}}C)
\end{align*}
the inequality
\begin{align*}
    & \alpha(A+BF^{\text{nominal}}C + \mathcal{X}) < 0
\end{align*}
must be satisfied. Since one cannot check such a satisfaction for the infinitely many values of $\mathcal{X} \in \mathbb{R}^{n \times n}$ satisfying the inequality $\|\mathcal{X}\|_F < a$ for a guess value $a$, the satisfaction of the following inequality:
\begin{align*}
    \Big(\max_{\mathcal{X} \in \mathbb{R}^{n \times n},\|\mathcal{X}\|_F= b} \alpha(A+BF^{\text{nominal}}C + \mathcal{X})\Big) < 0.
\end{align*}should be checked where $b = a^{-}$ is a guess value tending to a guess value $a$ from below. In \eqref{OPaux}, with a bit of abuse of notation, we have utilized the notation $\beta$ instead of the notation $b$. Observe that $\|\mathcal{X}\|_F = \beta$ holds for the following parameterization:
\begin{align*}
    & \mathcal{X} = \beta \mathbf{vec}^{-1}\bigg(\frac{v}{\|v\|}\bigg)
\end{align*}
that we have chosen in \eqref{OPaux}.

We initialize $\beta$ with $\beta_{\mathbb{R}}^{u}$ and at each step, we check if the maximum value, namely $\alpha^{\ast}$, is non-negative or not (i.e., the updating criterion for the bisection method). To solve the optimization problem, one could utilize MATLAB's built-in function \texttt{fminunc(.)}. Since any optimization solver---including \texttt{fminunc(.)}---can fail at identifying the globally optimal solution of the heuristic optimization problem \eqref{OPaux}, the stability is not guaranteed when using this
heuristic, but the empirical results show that it works in practice. We emphasize that the efficiency guarantee of the proposed updated stabilizing SOF controller substantially depends on the computational efficiency of MDRP $\beta$ as the computational complexity of \eqref{Osol} is $\mathcal{O}(n^2 \min \{m,p\})$. Furthermore, it is noteworthy that solving the heuristic optimization problem \eqref{OPaux} along with a specialized bisection method to obtain a near-optimal value of $\beta$ can become challenging for higher-order systems as the dimension of $v$ is $n^2$ and quadratically increasing with $n$. Furthermore, while we do not provide any theoretical convergence proof for the proposed heuristic, we later on practically verify its effectiveness via extensive numerical case studies.

To approximately compute the MDRP $\beta$, we highlight that one can alternatively employ the HEC-based approximation method proposed by \cite{guglielmi2017approximating} that has been implemented in the robust stability package, namely ROSTAPACK, developed by \cite{Mitchell2022RObust}. Precisely, the corresponding command is called \texttt{getStabRadBound}. The exact mathematical equation to be approximately solved for $\beta$ via the HEC-based approximation method \cite{guglielmi2017approximating} is as follows:
\begin{align} \label{aAcl}
    & \alpha_{\beta}(A+BF^{\text{nominal}}C) = 0
\end{align}
where $\alpha_{\epsilon}(M)$ denotes the real $\epsilon$-pseudospectral abscissa of $M$ (also known as spectral value set abscissa), which is the largest of the real parts of the elements of the real $\epsilon$-pseudospectrum $\Lambda_{\epsilon}^{\mathbb{R}}(M)$, i.e.,
\begin{align*}
    & \alpha_{\epsilon}(M) := \max \{\Re(z): z \in \Lambda_{\epsilon}^{\mathbb{R}}(M)\}\\
    & \Lambda_{\epsilon}^{\mathbb{R}}(M) :=\\ 
    & \{\lambda \in \mathbb{C}: \lambda \in \Lambda(M+\epsilon E), E \in \mathbb{R}^{n \times n}, \|E\|_F \le 1\}.
\end{align*} Although $\alpha_{\beta}(A+BF^{\text{nominal}}C)$ is a monotonically increasing continuous function of $\beta$, enabling the usage of the Newton-bisection method, no reliable way exists to the exact evaluation of $\alpha_{\beta}(A+BF^{\text{nominal}}C)$ \cite{guglielmi2017approximating} which is aligned with the discussion previously provided in Section \ref{sec:Main}-A-3. It is noteworthy that \eqref{aAcl} is equivalent to
\begin{align} \label{epsE}
    & \alpha(A+BF^{\text{nominal}}C + \beta E(\beta)) = 0
\end{align}
where $\|E(\beta)\|_F = 1$ holds. Comparing the form of \eqref{epsE} with the form of \eqref{betacomp}, we conclude that the HEC-based approximation method proposed by \cite{guglielmi2017approximating} and our proposed heuristic are similar in nature.
We also highlight that the norm choice affects the RSR (MDRP) computation algorithms. The norm considered in this work and \cite{guglielmi2017approximating} is the Frobenius norm. However, the spectral norm has also been used in the literature \cite{rostami2015new,guglielmi2016method,lu2017criss} in this context. Therefore, to have a fair comparison, we choose the HEC-based approximation method proposed by \cite{guglielmi2017approximating} as a similar counterpart.

\subsection{A toy example}

Let us consider a lateral axis model of an $L-1011$ aircraft in cruise flight conditions ($AC3$) \cite{leibfritz2006compleib}. For such a model, we have $(n,m,p) = (5,2,4)$. We design the following nominal stabilizing SOF controller $F^{\text{nominal}}$ via MATLAB built-in function $\texttt{hinfstruct}(.)$:
\begin{align*}
F^{\text{nominal}} &= \begin{bmatrix} 0 & 0 & 0 & -0.5057 \\ 0.7521 & 0 & -3.0713 & 1.1408 \end{bmatrix}
\end{align*}
for which $\beta = 0.1931$ (computed via solving the optimization problem \eqref{OPaux}) and $\beta_{\mathbb{R}}^u = 0.3230$ (computed via \eqref{UpBoG}) hold. Moreover, using \eqref{CSR} along with the MATLAB built-in command $\texttt{hinfnorm}(.)$, we get the CSR $\beta_{\mathbb{R}}^l = 0.1322$. For comparison, we repeated the $\beta$ computation via the HEC-based approximation method proposed by \cite{guglielmi2017approximating} and obtained the same value $\beta^{\text{HEC}} = 0.1931$ as ours, i.e., $\beta = 0.1931$.

Fig. \ref{FigKS} (Left) visualizes the stability regions for $AC3$ benchmark with $\frac{\beta}{\rho} = \frac{1}{2}$: the guaranteed (conservative) stability region based on Proposition \ref{Prop2} and the exact one based on $\alpha(A+ BF^{\text{nominal}}C + \Delta + B G_{\Delta}^{\ast}C) < 0$ with $G_{\Delta}^{\ast}$ in \eqref{Osol}. As expected, the guaranteed (conservative) stability region is a subset of the exact one.

For instance, the update $G_{\Delta}^{\ast}$ for a destabilizing $\Delta$ (i.e., $\alpha(A + BF^{\text{nominal}}C + \Delta) = 0.0483 \nless 0$)
\begin{align*}
    & \Delta =\\ 
    & \begin{bmatrix}
        0.0324  & -0.0120  & -0.0130 &   0.0186  & -0.0103\\
        -0.0681  &  0.0949  & -0.1547 &   0.0178  & -0.0175\\
        0.0126  & -0.0203  &  0.0348 &  -0.0052  &  0.0042\\
        0.0897 &  -0.0063 &  -0.0858  &  0.0672 &  -0.0312\\
        -0.0070  &  0.0010  &  0.0059  & -0.0050  &  0.0024
    \end{bmatrix}
\end{align*}
with $(\tau_{\Delta},\theta_{\Delta}) = (0.45,0.45)$ can be computed via Procedure \ref{proced} as follows:
\begin{align*}
G_{\Delta}^{\ast} &= \begin{bmatrix} 0.0745 & -0.2034 & 0.0214 & -0.0939 \\ 0.0115 & -0.0302 & 0.0018 & -0.0169 \end{bmatrix}
\end{align*}
for which $\|BG_{\Delta}^{\ast}C + \Delta\|_F = 0.1629 < \beta = 0.1931$ and $\alpha(A+ BF^{\text{nominal}}C + \Delta + B G_{\Delta}^{\ast}C) = -0.0637 < 0$ hold and the updated stabilizing SOF controller $F^{\text{updated}} = F^{\text{nominal}} + G_{\Delta}^{\ast}$ is as follows:
\begin{align*}
F^{\text{updated}} &= \begin{bmatrix} 0.0745 & -0.2034 & 0.0214 & -0.5996 \\ 0.7636 & -0.0302 & -3.0695 & 1.1239 \end{bmatrix}.
\end{align*}
\begin{figure}[t]
\subfloat[a][]{\includegraphics[scale=0.24]{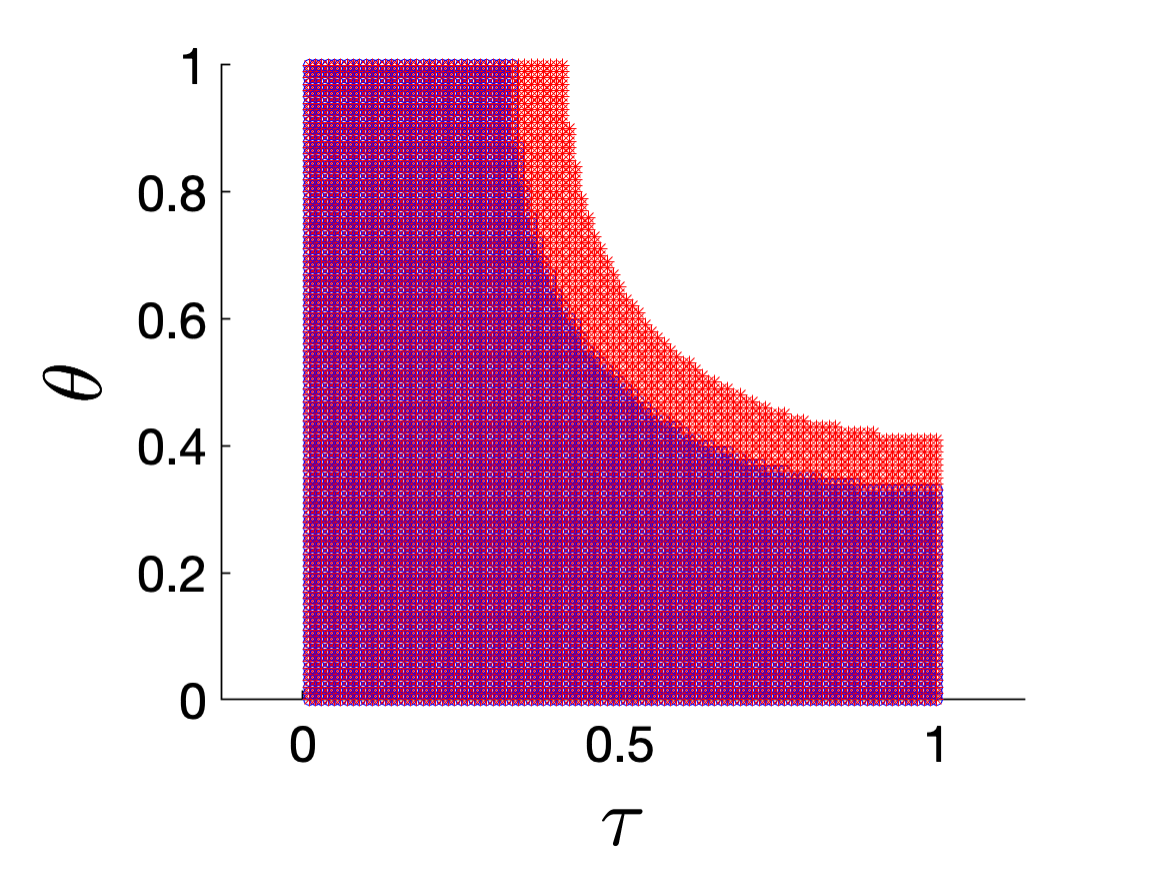}\label{<figu1>}}
\subfloat[b][]{\includegraphics[scale=0.24]{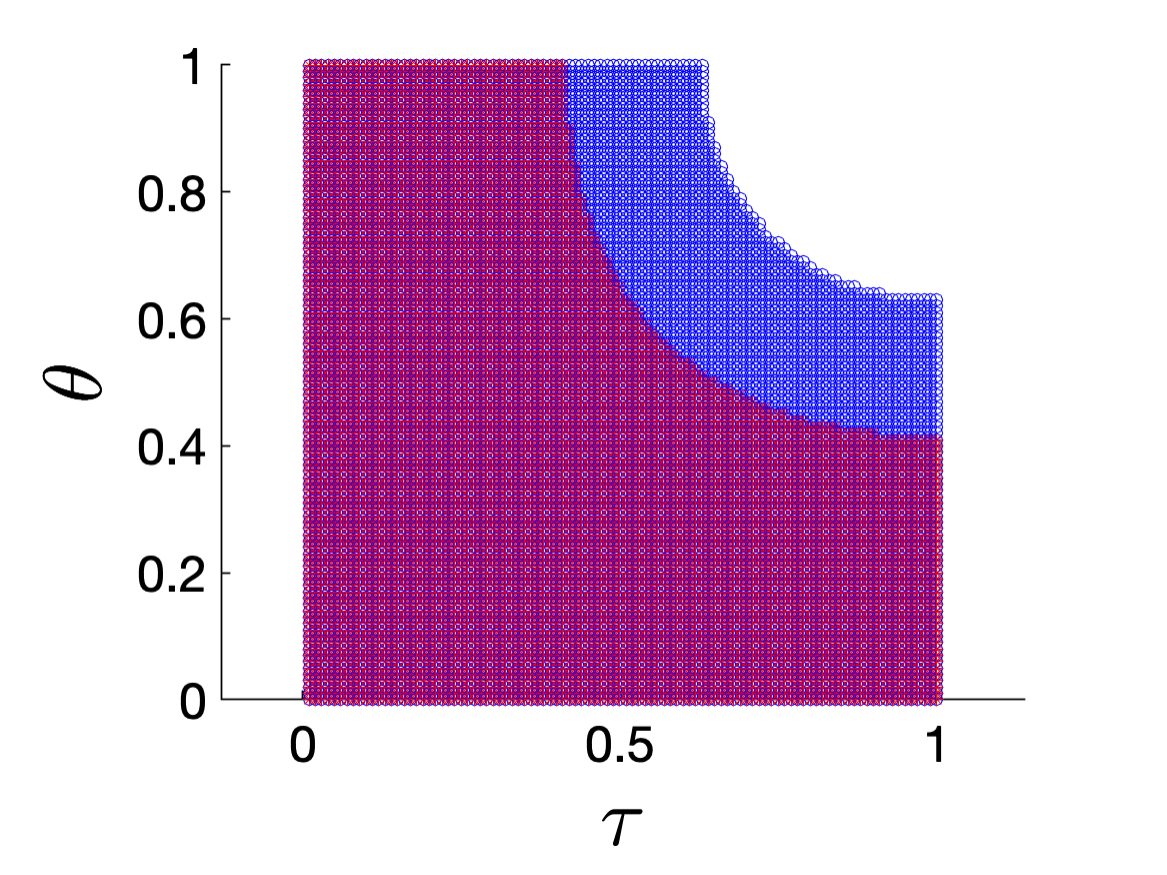}\label{<figu2>}}

\caption{Stability regions for $AC3$ benchmark with (a) $\rho = 2\beta^{\mathrm{accurate}}$ and $\beta = \beta^{\mathrm{accurate}}$ (computed via solving the optimization problem \eqref{OPaux}) and (b) $\rho = 2\beta^{\mathrm{accurate}}$ and $\beta = \beta^{\mathrm{inaccurate}}$: the guaranteed (conservative) stability regions based on Proposition \ref{Prop2} (filled with \textcolor{black}{blue} circles) and the exact ones based on $\alpha(A+ BF^{\text{nominal}}C + \Delta + B G_{\Delta}^{\ast}C) = \alpha(A+\Delta + B F^{\text{updated}} C) < 0$ with $G_{\Delta}^{\ast}$ in \eqref{Osol} (filled with \textcolor{black}{red} asterisks).} \label{FigKS}
\end{figure}

Remarkably, the accurate computing of $\beta$ via solving the optimization problem \eqref{OPaux} plays a significant role in accurately identifying the stability regions. As Fig. \ref{FigKS} (Right) depicts, choosing $\rho$ equal to $2 \times 0.1931$ (as chosen for Fig. \ref{FigKS} (Left)) and $\beta$ equal to $0.3230$ (an inaccurate value), leads to the misleading stability regions. First, the guaranteed (conservative) stability region has erroneously been enlarged. Second, the guaranteed (conservative) stability region has erroneously become the superset of the exact one.

For comparison, we added randomly-generated changes (via MATLAB built-in function $\texttt{rand}(.)$) to the state-space matrices associated with $AC3$ \cite{leibfritz2006compleib} to get the following randomly-generated matrices $(A,B,C)$ along with the computed nominal stabilizing SOF controller $F^{\text{nominal}}$ via MATLAB built-in function $\texttt{hinfstruct}(.)$: 
\begin{align*}
    & A =\\ 
    & \begin{bmatrix} 0.3568  &  0.3436  &  1.2383  &  0.1679  &  0.3448\\
    0.2808  & -0.2208  &  0.0488  &  1.4823 &  -0.2322\\
   -0.2263 &  -0.2345  & -0.9835  & -4.8075  & -0.0586\\
    0.3235  & -0.9357  &  0.1612 &  -0.2072  & -0.2350\\
    0.4205  &  0.4980  & -0.4875  & -0.2943  & -0.1271
    \end{bmatrix}\\
    & B = \begin{bmatrix}
        0.4356  & -0.0332\\
   -1.2120  & -0.1706\\
    0.4148  & -1.3938\\
    0.4407  & -0.4383\\
    0.4842  &  0.1401
    \end{bmatrix}\\
    & C =\\
    & \begin{bmatrix}
         0.4943  &  0.5142 &  -0.3319  &  0.4558  & -1.4760\\
   -0.2771  &  0.1801  &  0.6128  &  0.3849  &  0.3282\\
   -0.2527  &  0.1640  & -0.0768  &  1.4660  &  0.0418\\
    1.4222  &  0.1010  & -0.2804 &  -0.1668  &  0.1605
    \end{bmatrix}\\
    & F^{\text{nominal}} = \begin{bmatrix} 0 & 0 & 0 & 0.1919 \\ 0.1571 & 0 & -0.6671 & 0.2497 \end{bmatrix}
\end{align*}
and repeated the simulations for both our heuristic optimization method and the HEC-based approximation method proposed by \cite{guglielmi2017approximating}. According to \eqref{UpBoG}, we get $\beta_{\mathbb{R}}^u = 0.0320$. Our heuristic optimization method gives: $\beta = 0.0320$ which is equal to the CSR $\beta_{\mathbb{R}}^l = 0.0320$. However, the HEC-based approximation method \cite{guglielmi2017approximating} gives: $\beta^{\text{HEC}} = 0.0586$. Thus, we observed that ours outperformed theirs. Indeed, $\beta = 0.0586$ is a loose upper bound on the MDRP rather than the exact value. To prove that, we used the optimization problem and solved it for $v$, setting the value of $\beta$ equal to the fixed value of $\beta^{\text{HEC}} = 0.0586$ and obtained the following $\mathcal{X}(\beta^{\text{HEC}})$:
\begin{align*}
    & \beta^{\text{HEC}} \mathbf{vec}^{-1}\bigg(\frac{v}{\|v\|}\bigg) =\\
    & \begin{bmatrix}
        0.0130  &  0.0008  &  0.0015 &  -0.0018  & -0.0079\\
    0.0053  &  0.0087  &  0.0274  & -0.0086 &  -0.0383\\
    0.0018  &  0.0030  &  0.0096 &  -0.0030 &  -0.0134\\
   -0.0187  &  0.0010  &  0.0049  &  0.0006  &  0.0022\\
    0.0068  &  0.0013  &  0.0036  & -0.0018  & -0.0078
    \end{bmatrix}
\end{align*}
for which $\alpha(A+BF^{\text{nominal}}C + \mathcal{X}(\beta^{\text{HEC}})) = 0.0187 \nleq 0$ holds. Thus, according to \eqref{aAcl}, the exact (true) value of $\beta$ is less than $\beta^{\text{HEC}} = 0.0586$.

\subsection{Two power system examples}

\subsubsection{Multi-machine power systems}

Here, we demonstrate how the quick updates for the SOF control law can stabilize multi-machine power networks while still performing nearly identically to the more computationally intensive SOF formulations. In particular, we consider a multi-machine power system represented by the following nonlinear differential-algebraic equation (NDAE) model~\cite{kundur2007power,sauer2017power}:
\begin{subequations} \label{NDAEs}
    \begin{align}
    \dot{x}(t) & = g(x(t),a(t),u(t))\\
    d(t) & = h(x(t),a(t))
\end{align}
\end{subequations}where $x(t) \in \mathbb{R}^{n_sG}$, $a(t) \in \mathbb{R}^{2N+2G}$, and $u(t) \in \mathbb{R}^{n_cG}$ denote the differential states vector (frequencies, angles, emfs, et cetera), algebraic states vector (bus voltages and currents), and control input vector (excitation field voltage and mechanical input power), respectively with $N$, $G$, $n_s$, $n_c$ as number of buses, number of generators, number of states and control inputs in each generator. Vector $d(t)$ models demand (loads) in the $n_d$ buses in the network. The vector-valued mapping $g(\cdot)$ captures the generator physics, and the mapping $h(\cdot)$ models the algebraic power flow equations (i.e., KCL/KVL in the power network). This NDAE model captures the fundamental electromechanical dynamics of the grid and is widely used in the industry to predict power system transients and to perform feedback control and state estimation. 

A large volume of feedback control architectures in power systems resorts to linearizing the NDAE-modeled dynamics \eqref{NDAEs} around a known equilibrium point $z^{0} = (x^0,a^0,u^0)$. The reader is referred to~\cite{sadamoto2019dynamic, mondal2020power} for some thorough expositions on feedback control and modeling of power systems, and linearized, small-signal power system models. After computing the equilibrium point and eliminating the linearized algebraic equation, one can get the following linear ordinary differential equation (ODE) state-space representation:
\begin{align}
    \Delta \dot{x}'(t) & = A(z^0) \Delta x'(t) + B(z^0) \Delta u'(t)
\end{align}
with $\Delta x'(t) := \Delta x(t) - \Delta x^{s}(t)$, $\Delta x(t) = x(t) - x^{0}$, and $\Delta x'(0) = x^{0} - x^{s}$ where superscript $^{s}$ denotes the desired equilibrium point. Also, $A(z^0)$ and $B(z^0)$ are computed as follows:
\begin{align*}
    A(z^0) &= g_x(z^0) - g_a(z^0) h_a^{-1}(x^0,a^0)h_x(x^0,a^0)\\
    B(z^0) &= g_u(z^0) 
\end{align*}
where $g_x$, $g_a$, $h_x$, and $h_a$ represent the Jacobian matrices associated with $g$ and $h$ with respect to $x$ and $a$, respectively. Initially, the power system is operating at a known equilibrium point $z^{0}$. By solving an LQR-based optimal power flow (OPF) problem (LQR-OPF problem in \cite{bazr2017,bazrafshan2018coupling}, or virtually any other feedback controller), one can get a nominal LQR state feedback controller $K^{\text{LQR,nominal}}$ that steers the system to the desired equilibrium point, namely $z^s$. LQR-OPF problem, which is an SDP consisting of the LMIs, is periodically solved for the desired equilibrium point after each period (typically in minutes). We will use such $K^{\text{LQR,nominal}}$ and solve $K^{\text{LQR,nominal}} = F^{\text{nominal}}C$ for a nominal stabilizing SOF controller $F^{\text{nominal}}$ via $F^{\text{nominal}} = K^{\text{LQR,nominal}}C^\top (C C^\top)^{-1}$ \cite{gadewadikar2009h} given an output matrix $C$ as $C = \texttt{eye}(p,n)$ satisfying $p < n$. Moreover, as a known perturbation $\Delta$, we have $\Delta = A(z^s) - A(z^0)$ wherein
\begin{align*}
    A(z^s) &= g_x(z^s) - g_a(z^s) h_a^{-1}(x^s,a^s)h_x(x^s,a^s)\\
    B(z^s) &= g_u(z^s).
\end{align*}
This procedure (of obtaining a new operating point and computing the updated feedback gain) is repeated in two different power scenarios or contexts. The first pertains to changing optimal power flow (OPF) setpoints, while the second can take place when a fault happens in the system, resulting in explicit changes to the state-space matrices $A$ and $B$. Both scenarios yield updated $\Delta$ or a perturbed system model. 

\begin{figure}[t]
	\centering
	\includegraphics[scale=0.4]{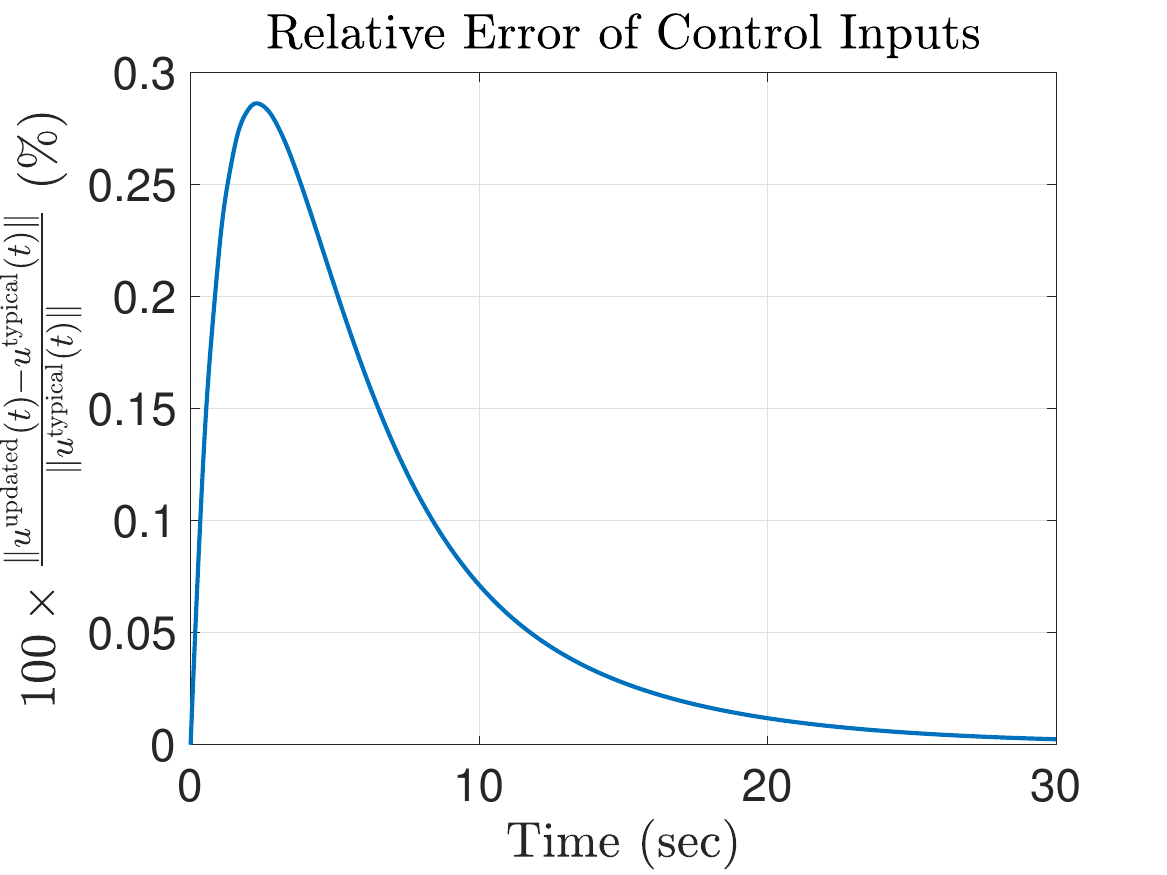}
	\caption{Relative error of control inputs $100 \times \frac{\|u^{\text{updated}}(t)-u^{\text{typical}}(t)\|}{\|u^{\text{typical}}(t)\|}$ ---with $u^{\text{updated}}(t) = F^{\text{updated}} y^{\text{updated}}(t)$ and $u^{\text{typical}}(t) = F^{\text{typical}} y^{\text{typical}}(t)$ as control inputs considering case $14$-bus.}
	\label{fig6}
\end{figure}

\begin{table}[t]
	\centering
	\caption{Computational time for two approaches: (\textit{i}) typical (SDP-based alternative \cite{bazrafshan2018coupling}) and (\textit{ii}) updating (Procedure \ref{proced}), considering the three IEEE power system test cases: case $9$-bus, case $14$-bus, and case $39$-bus.}
	{\small \begin{tabular}{|c|c|c|}
			\hline
			\diagbox{Case}{Approach} & Typical \cite{bazrafshan2018coupling} & Updating (Procedure \ref{proced}) \\
			\hline
			$9$-bus & $1.1444$ s & \cellcolor{lightgray} $0.0048$ s\\
			\hline
			$14$-bus & $3.6817$ s & \cellcolor{lightgray} $0.0077$ s \\
			\hline
			$39$-bus & $9.6370$ s & \cellcolor{lightgray} $0.0079$ s\\
			\hline
		\end{tabular}
		\label{tab0}}
\end{table}

\begin{figure*}[t]
	\centering
	
	\subfloat[a][]{\includegraphics[scale=0.3]{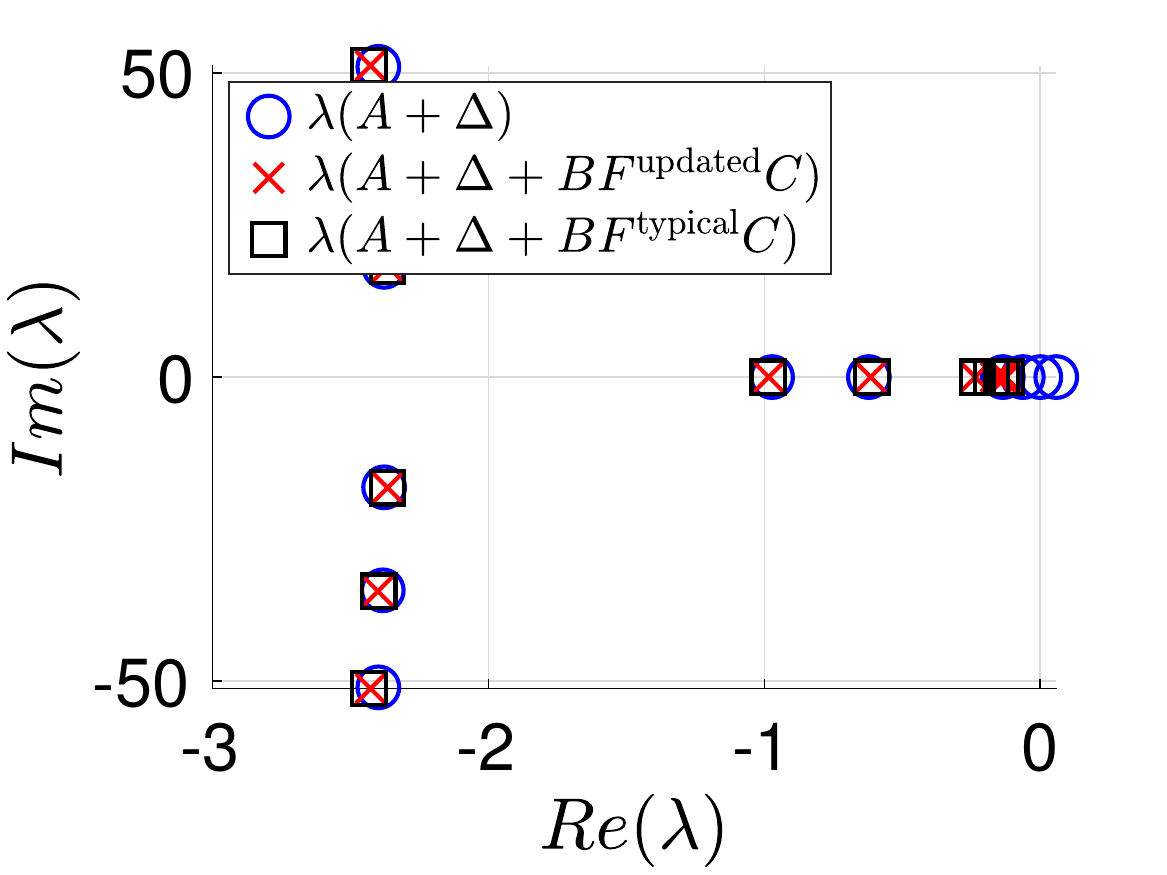}\label{<fig1>}}
	\subfloat[b][]{\includegraphics[scale=0.3]{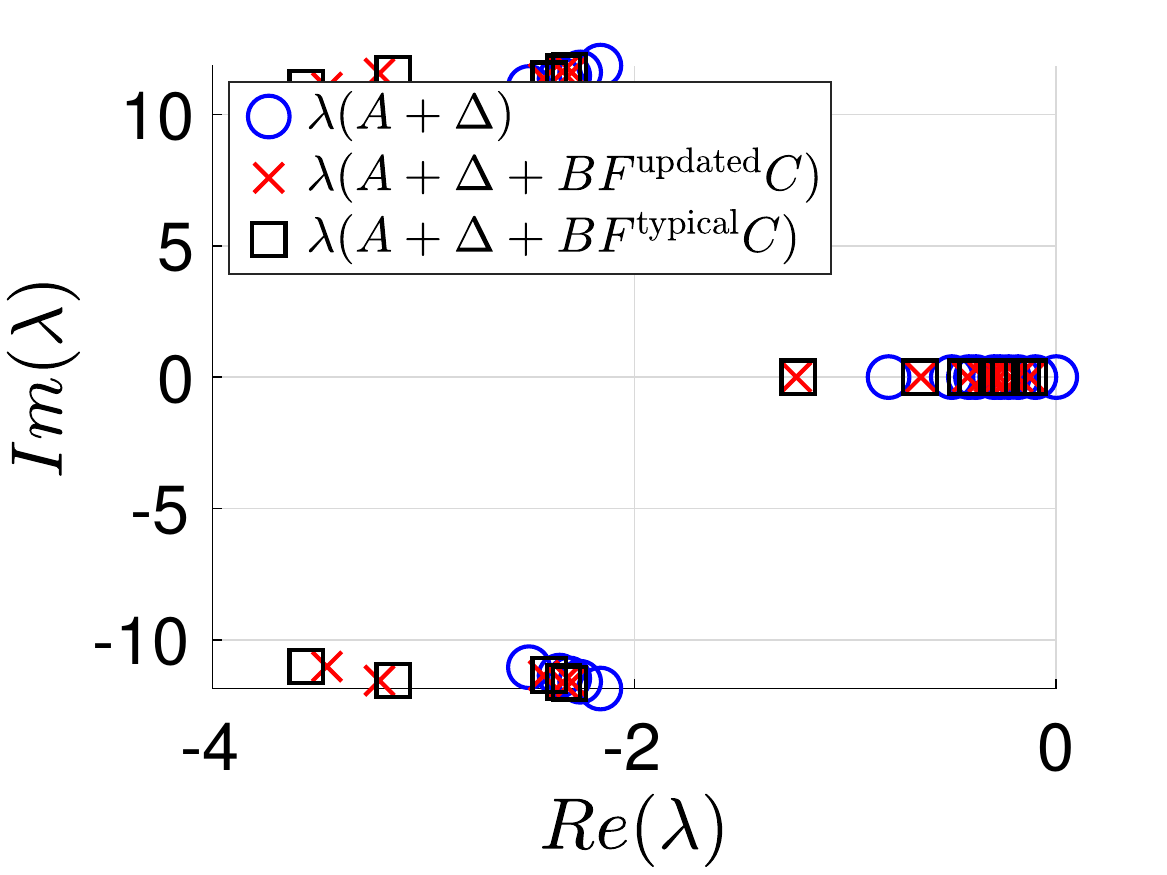}\label{<fig2>}}
	\subfloat[c][]{\includegraphics[scale=0.3]{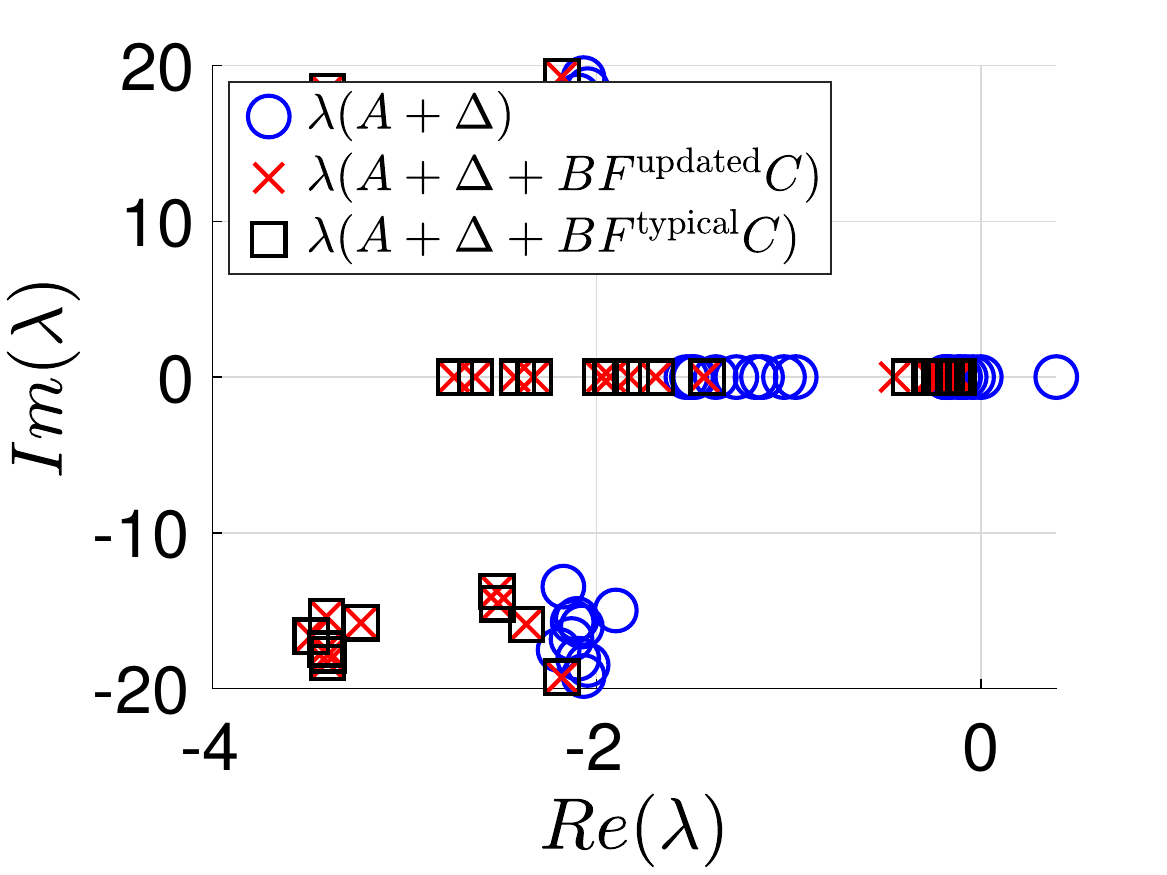}\label{<fig3>}}
	
	\caption{Spectra associated with $A+\Delta$, $A+\Delta + BF^{\text{updated}}C$, and $A+\Delta + BF^{\text{typical}}C$ considering the three IEEE power system test cases: (a) case $9$-bus, (b) case $14$-bus, and (c) case $39$-bus.}
	\label{fig:eigs}
\end{figure*}

Fig. \ref{fig6} visualizes the relative error of control inputs $100 \times \frac{\|u^{\text{updated}}(t)-u^{\text{typical}}(t)\|}{\|u^{\text{typical}}(t)\|}$---associated with $u^{\text{updated}}(t) = F^{\text{updated}} y^{\text{updated}}(t)$ and $u^{\text{typical}}(t) = F^{\text{typical}} y^{\text{typical}}(t)$ considering case $14$-bus, respectively. As Fig. \ref{fig6} depicts, there is no meaningful change (a maximum relative error of $\approx 0.3\%$) in the control law, i.e., this will incur almost no additional cost.

Considering the data information of three IEEE power system test cases (each generator has been modeled with a $4$-th order dynamics): case $9$-bus with $(N,G,n_s,n_c) = (9,3,4,2)$, case $14$-bus with $(N,G,n_s,n_c) = (14,5,4,2)$, and case $39$-bus with $(N,G,n_s,n_c) = (39,10,4,2)$, we compare the computational time for the following two approaches: (\textit{i}) repeating the whole SDP-based design procedure LQR-OPF and computing $F^{\text{typical}}$ via solving LQR-OPF for $K^{\text{LQR,typical}}$ along with $F^{\text{typical}} = K^{\text{LQR,typical}}C^\top (C C^\top)^{-1}$ and (\textit{ii}) computing the quick update $G_{\Delta}^{\ast}$ via \eqref{Osol}, i.e., $G_{\Delta}^{\ast} = -B^{+} \Delta ({C}^{\top +})^\top$ and getting $F^{\text{updated}} = K^{\text{LQR,nominal}}C^\top (C C^\top)^{-1} -B^{+} \Delta ({C}^{\top +})^\top$ via Procedure \ref{proced}. We choose $C$ as $C = \texttt{eye}(p,n)$ with $p = n-3$. Moreover, the relative norm of the perturbation defined as $100 \times \frac{\|\Delta\|_F}{\|A(z^0)\|_F}~(\%)$ takes the following values for the three IEEE power system test cases: case $9$-bus, case $14$-bus, and case $39$-bus, respectively: $4.0044\%$, $1.4886\%$, and $8.1370\%$. Tab.~\ref{tab0} presents the computational time for two approaches: (\textit{i}) typical (SDP-based alternative \cite{bazrafshan2018coupling}) and (\textit{ii}) updating (Procedure \ref{proced}), considering the three IEEE power system test cases: case $9$-bus, case $14$-bus, and case $39$-bus. It is noteworthy that for a large-scale power system test case, namely case $200$-bus with $196$ states, we waited more than $12$ hours and still did not get a solution from the SDP-based design procedure LQR-OPF. Subject to the availability of $F^{\text{nominal}}$ and $\Delta$, our proposed quick update could be computed in less than a second. 

As Tab.~\ref{tab0} reflects, the analytical updating approach outperforms the typical SDP-based alternative in terms of computational time. We highlight that for the three IEEE power system test cases: case $9$-bus, case $14$-bus, and case $39$-bus, $\alpha(A+ BF^{\text{nominal}}C + \Delta + B G_{\Delta}^{\ast}C)$ takes the following values, respectively: $-0.1253 < 0$, $-0.1256 < 0$, and $-0.1169 < 0$ confirming the success of the proposed updating approach in terms of closed-loop stabilization subject to a known perturbation. Fig. \ref{fig:eigs} visualizes the spectra associated with $A+\Delta$, $A+\Delta + BF^{\text{updated}}C$, and $A+\Delta + BF^{\text{typical}}C$ considering the three IEEE power system test cases: case $9$-bus, case $14$-bus, and case $39$-bus, respectively. As Fig. \ref{fig:eigs} depicts, the proposed updating approach can successfully attenuate the effect of the destabilizing real perturbation $\Delta$ and attain a stabilization performance similar to the typical SDP-based alternative while significantly reducing the computational time. Fig. \ref{fig:stats} visualizes some of the state trajectories---generator voltage magnitudes, generator voltage phases, real powers, and reactive powers---associated with $A+\Delta + BF^{\text{updated}}C$ (Left) and $A+\Delta + BF^{\text{typical}}C$ (Right) considering case $14$-bus, respectively. As Fig. \ref{fig:stats} depicts, the proposed updating approach can successfully attenuate the effect of the destabilizing real perturbation $\Delta$ and attain a stabilization performance similar to the typical SDP-based alternative while significantly reducing the computational time.

\subsubsection{Second-order swing dynamics-based power systems}

We consider the case of power systems with the second-order swing dynamics \cite{kundur1994power,wu2014sparsity}
\begin{align*}
    & \dot{x} = A(L)x + Bu,~A(L)=T^\top \bar{A}(L) T,B = T^\top \bar{B}\\
    & \bar{A}(L) = \begin{bmatrix}
        0 & I_N\\-M^{-1}L & -M^{-1}D
    \end{bmatrix}, \bar{B} = \begin{bmatrix}
            0\\M^{-1}
        \end{bmatrix}\\
    & T = \begin{bmatrix}
            U & 0\\0 & I_N
        \end{bmatrix}, U^\top U = I_{N-1}, UU^\top = I_N - \frac{1}{N} \mathbf{1}\mathbf{1}^\top\\ 
        & U^\top \mathbf{1}_N = 0_{N-1},~N: \text{Number~of~Generators}
\end{align*}
where $M \in \mathbb{R}^{N \times N}$, $D \in \mathbb{R}^{N \times N}$, and $L \in \mathbb{R}^{N \times N}$ denote the diagonal matrix of generator inertia coefficients, diagonal matrix of generator damping coefficients, and Laplacian matrix of susceptance values among the generators, respectively.

Note that we can implement the susceptance link removals via perturbing $L$. Observe that we have
\begin{align*}
    & \bar{A}(L+\delta L) = \begin{bmatrix}
        0 & I_N\\-M^{-1}(L+\delta L) & -M^{-1}D
    \end{bmatrix} =\\ 
    & \bar{A}(L) + \begin{bmatrix}
        0 & 0\\-M^{-1}(\delta L) & 0
    \end{bmatrix}\\
    & A(L+\delta L) = T^{\top} \bar{A}(L+\delta L)T= \\
    & A(L) + \underbrace{T^\top\begin{bmatrix}
        0 & 0\\-M^{-1}(\delta L) & 0
    \end{bmatrix}T}_{\Delta}. 
\end{align*}
Then, we utilize the following criterion (the main idea behind Procedure \ref{proced}):
\begin{align*}
    & \Bigg \|BG^{\ast}_{\Delta}C + \underbrace{T^\top\begin{bmatrix}
        0 & 0\\-M^{-1}(\delta L) & 0
    \end{bmatrix}T}_{\Delta} \Bigg \|_F \le\\
    & \underbrace{\beta_{\mathbb{R}}(A(L) + BF^{\text{nominal}}C)}_{\beta}
\end{align*}
where $G^{\ast}_{\Delta} = -B^{+} \Delta ({C}^{\top +})^\top$ according to \eqref{Osol}.

Considering the second-order swing dynamics for the IEEE case $14$-bus in \cite{11010910} and choosing $C = \texttt{eye}(N,2N-1)$, we get $F^{\text{nominal}} = K^{\text{LQR,nominal}}C^\top (C C^\top)^{-1}$ as follows:
\begin{align*}
    & \begin{bmatrix}
        0.0750  & -0.0132  &  0.0259  &  0.1828 &   -0.9492\\
   -0.0326  & -0.0201  &  0.0502  &  0.1593 &  -0.0104\\
   -0.0105  & -0.0871  &  0.0303  &  0.0108 &   0.0006\\
   -0.1247  &  0.0525  &  0.2829  &  0.0232 &   0.0044\\
   -0.1955  &  0.1139  &  0.2038 &   0.0573 &    0.0060
    \end{bmatrix}
\end{align*}
for which $\beta = 0.0917$ (computed via solving the optimization problem \eqref{OPaux}) and $\beta_{\mathbb{R}}^u = 0.5750$ (computed via \eqref{UpBoG}) hold. Also, for the CSR, $\beta_{\mathbb{R}}^l = 0.0442$ holds according to \eqref{CSR}. For instance, the update $G_{\Delta}^{\ast}$ associated with the removals of the susceptance links $(3,4)$ and $(1,5)$ (identified as the two most influential links in \cite{11010910} via an edge centrality matrix (ECM))
\begin{align*}
    & \Delta = T^\top\begin{bmatrix}
        0 & 0\\-M^{-1}(\delta L) & 0
    \end{bmatrix}T,~T = \begin{bmatrix}
            U & 0\\0 & I_N
        \end{bmatrix}\\
        & U = \begin{bmatrix}
            -0.4714 &  -0.1667  & -0.2236  &  0.7071\\
   -0.4714  & -0.1667  & -0.2236  & -0.7071\\
    0.2357 &   0.8333  & -0.2236  &  0.0000\\
    0.7071  & -0.5000 &  -0.2236  &  0.0000\\
         0    &     0  &  0.8944    &     0
        \end{bmatrix}\\
        & M = \begin{bmatrix}
            0.1273    &     0    &     0    &     0 &        0\\
         0   & 0.0541     &    0     &    0   &      0\\
         0    &     0  &  0.0308  &       0  &       0\\
         0    &     0     &    0  &  0.0775  &       0\\
         0   &      0    &     0    &     0 &    0.1379
        \end{bmatrix}\\
        & \delta L = \begin{bmatrix}
            -4.0759    &     0   &      0   &      0   & 4.0759\\
         0     &    0     &    0     &    0 &        0\\
         0     &    0  & -5.3693  &  5.3693 &        0\\
         0    &     0  &  5.3693 &  -5.3693 &        0\\
    4.0759   &      0     &    0    &     0 &   -4.0759
        \end{bmatrix}
\end{align*}
can be computed via Procedure \ref{proced} as follows:
\begin{align*}
    & G_{\Delta}^{\ast} = \begin{bmatrix}
        1.9214 & 0.6793 & 4.5570 & -2.8821 & 0\\
        0 & 0 & 0 & 0 & 0\\
        2.5311 & -7.1591 & 0.0000 & 0.0000 & 0\\
        -2.5311 & 7.1591 & -0.0000 & -0.0000 & 0\\
        -1.9214 & -0.6793 & -4.5570 & 2.8821 & 0
    \end{bmatrix}
\end{align*}
for which $\|B G_{\Delta}^{\ast}C + \Delta\|_F = 0 < \beta = 0.0917$ and $\alpha(A(L)+ BF^{\text{nominal}}C + \Delta + B G_{\Delta}^{\ast}C) = -0.6092 < 0$ hold and the updated stabilizing SOF controller $F^{\text{updated}} = F^{\text{nominal}} + G_{\Delta}^{\ast}$ is as follows:
\begin{align*}
& \begin{bmatrix} 1.9964   & 0.6661  &  4.5829  & -2.6993 &  -0.9492 \\ -0.0326  & -0.0201  &  0.0502  &  0.1593   & -0.0104\\
2.5206  & -7.2461  &  0.0303  &  0.0108  &  0.0006\\
-2.6558  &  7.2116  &  0.2829  &  0.0232  &  0.0044\\
-2.1169  & -0.5655  & -4.3531  &  2.9394 &   0.0060
\end{bmatrix}.
\end{align*}Since $\alpha(A(L)+ BF^{\text{nominal}}C + \Delta + B G_{\Delta}^{\ast}C) = -0.6092 < 0$ holds, the updated SOF controller $F^{\text{updated}} = F^{\text{nominal}} + G_{\Delta}^{\ast}$ computed by the algorithm, has successfully stabilized the perturbed  $A(L)$, i.e., $A(L) + \Delta$. For comparison, we repeated the $\beta$ computation via the HEC-based approximation method proposed by \cite{guglielmi2017approximating} and obtained $\beta^{\text{HEC}} = 0.2231$. Thus, we observed that ours outperformed theirs. Indeed, $\beta = 0.2231$ is a loose upper bound on the MDRP rather than the exact value. To prove that, we used the optimization problem and solved it for $v$, setting the value of $\beta$ equal to the fixed value of $\beta^{\text{HEC}} = 0.2231$ and obtained the following $\mathcal{X}(\beta^{\text{HEC}})$:
\begin{widetext}
\[
\beta^{\text{HEC}} \mathbf{vec}^{-1}\bigg(\frac{v}{\|v\|}\bigg) = 
    \begin{bmatrix} 0.0023  & -0.0021 &  -0.0009 &    0.0015  & -0.0170  &  0.0757& -0.0411  &  0.0299 &  -0.0225\\
   -0.0014  &  0.0031  &  0.0016  &  0.0004  &  0.0161 &  -0.0782 & 0.0080  &  0.0267  & -0.0023\\
   -0.0020  &  0.0039  &  0.0020  &  0.0002  &  0.0214  & -0.1025& 0.0165 &   0.0249  &  0.0015\\
    0.0025  &  0.0009  &  0.0009  &  0.0040 &  -0.0081  &  0.0248& -0.0747  &  0.1117 &  -0.0527\\
   -0.0000  &  0.0001  &  0.0000  &  0.0000  &  0.0004 &  -0.0019& -0.0001  &  0.0012  & -0.0003\\
    0.0001 &  -0.0002 &  -0.0001  & -0.0000  & -0.0009   & 0.0043& -0.0010  & -0.0005  & -0.0003\\
   -0.0000   & 0.0000  &  0.0000 &  -0.0000  &  0.0002 &  -0.0008 &0.0008  & -0.0010 &   0.0005\\
    0.0000  &  0.0001   & 0.0000  &  0.0002 &   0.0000  & -0.0006& -0.0023  &  0.0043 &  -0.0018\\
   -0.0001 &  -0.0000  & -0.0000 &  -0.0001  &  0.0002  & -0.0006 & 0.0022 &  -0.0033  &  0.0016
\end{bmatrix}\]
\end{widetext}\hspace{-0.125in} for which $\alpha(A+BF^{\text{nominal}}C + \mathcal{X}(\beta^{\text{HEC}})) = 1.2874 \nleq 0$ holds. Thus, according to \eqref{aAcl}, the exact (true) value of $\beta$ is less than $\beta^{\text{HEC}} = 0.2231$.

Remarkably, the MDRP in this power system is conservative as it is computed considering all Frobenius norm-bounded unstructured perturbations. In other terms, to obtain the specialized exact MDRP in this power system, the HEC-based approximation method and our heuristic method should be modified accordingly to take into account the specific structure associated with the structured perturbation $T^\top \begin{bmatrix}
    0 & 0\\-M^{-1}(\delta L) & 0
\end{bmatrix}$. Such a modification to the corresponding methods is eliminated due to space limitations and left as future work. Note that the aforementioned conservatism originates from the fact that the set of structured perturbations is a strict subset of the set of unstructured ones.

\begin{figure}[t]
	\centering
	
	\subfloat[a][]{\includegraphics[scale=0.225]{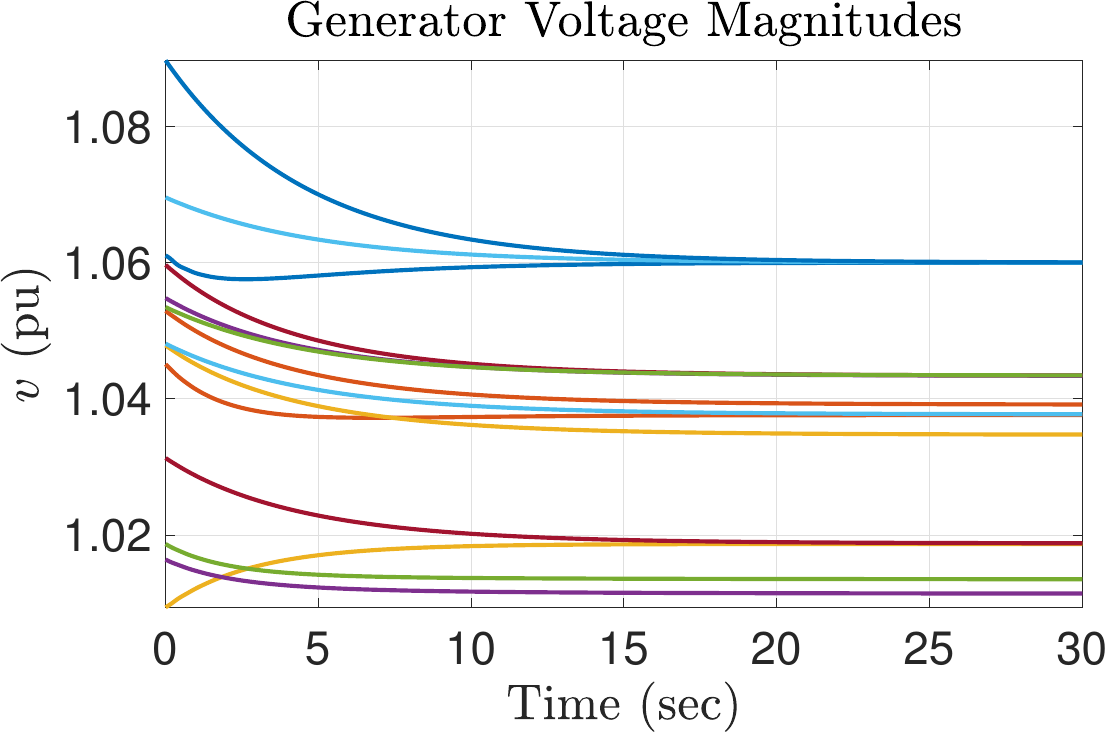}\label{<fi1>}}
	\subfloat[b][]{\includegraphics[scale=0.225]{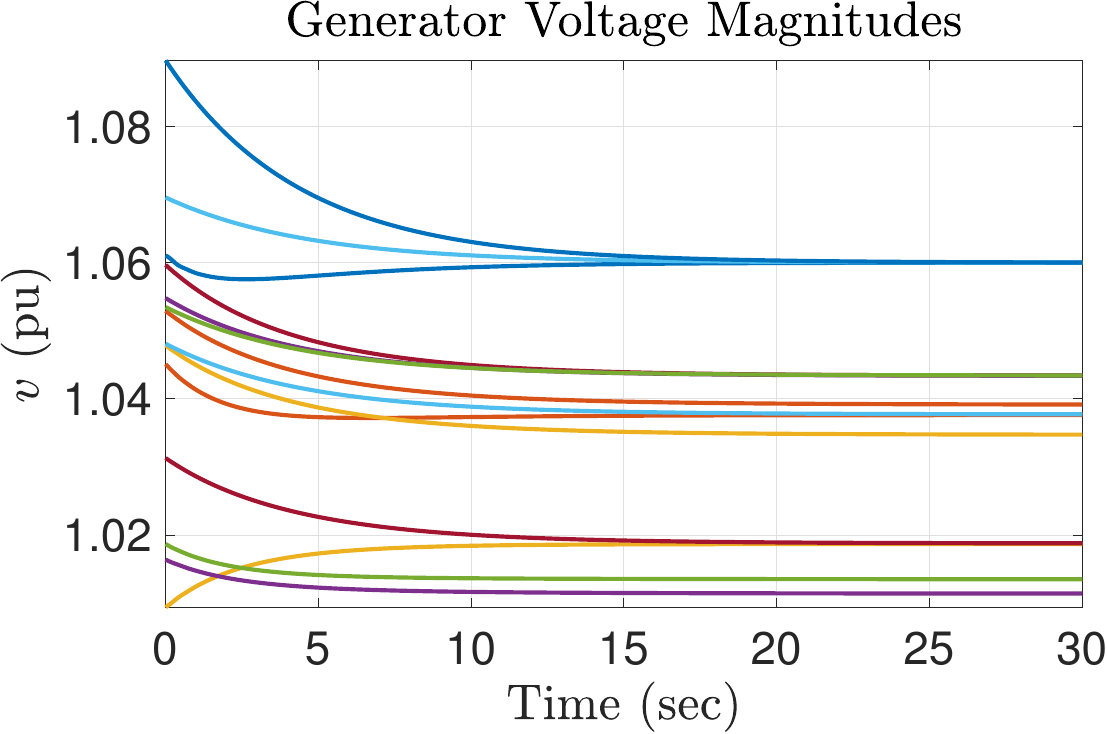}\label{<fi2>}}
	
	\subfloat[c][]{\includegraphics[scale=0.225]{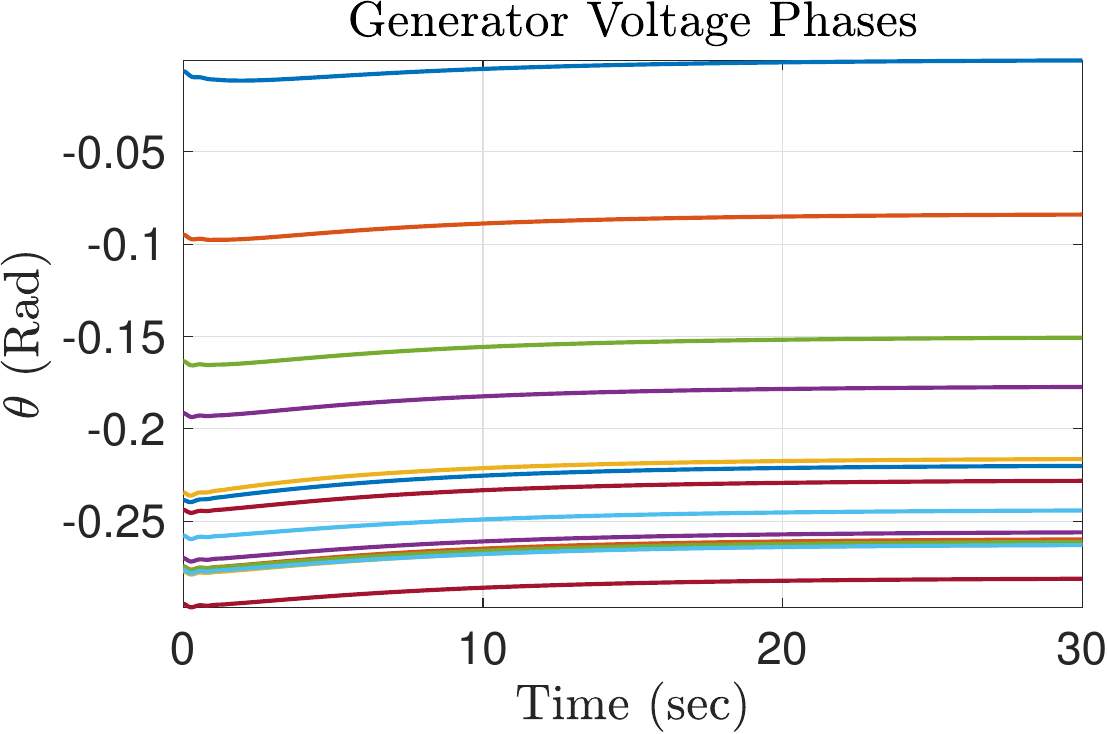}\label{<fi3>}}
	\subfloat[d][]{\includegraphics[scale=0.225]{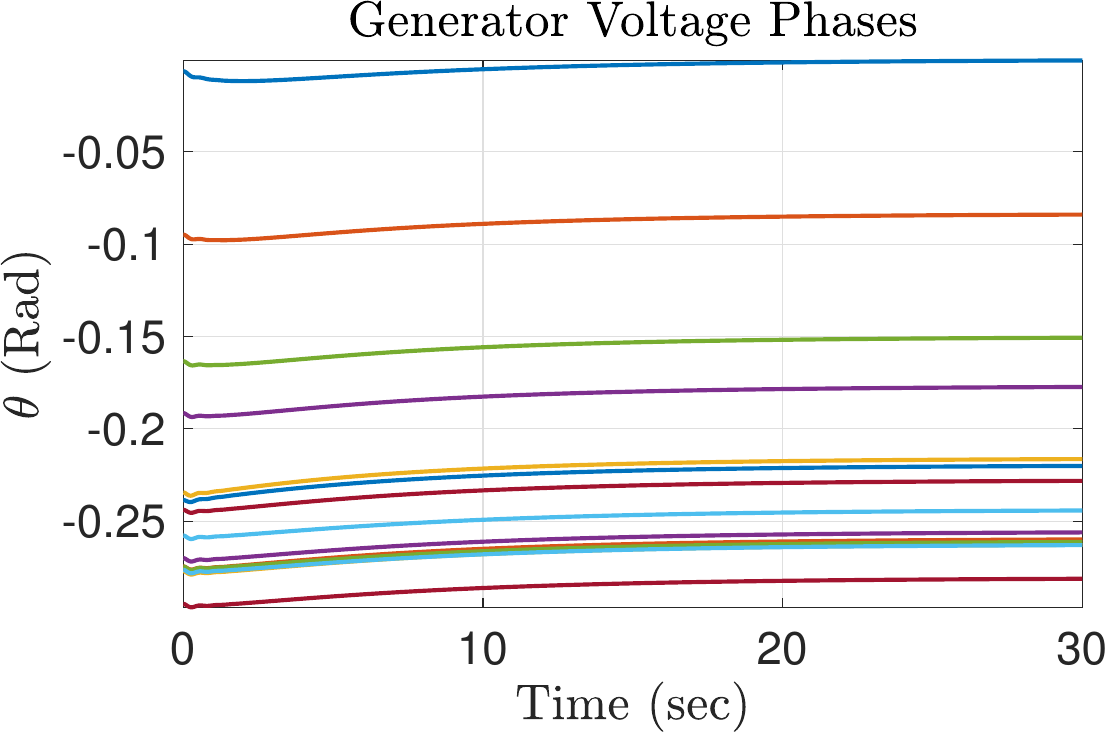}\label{<fi4>}}
	
	\subfloat[e][]{\includegraphics[scale=0.225]{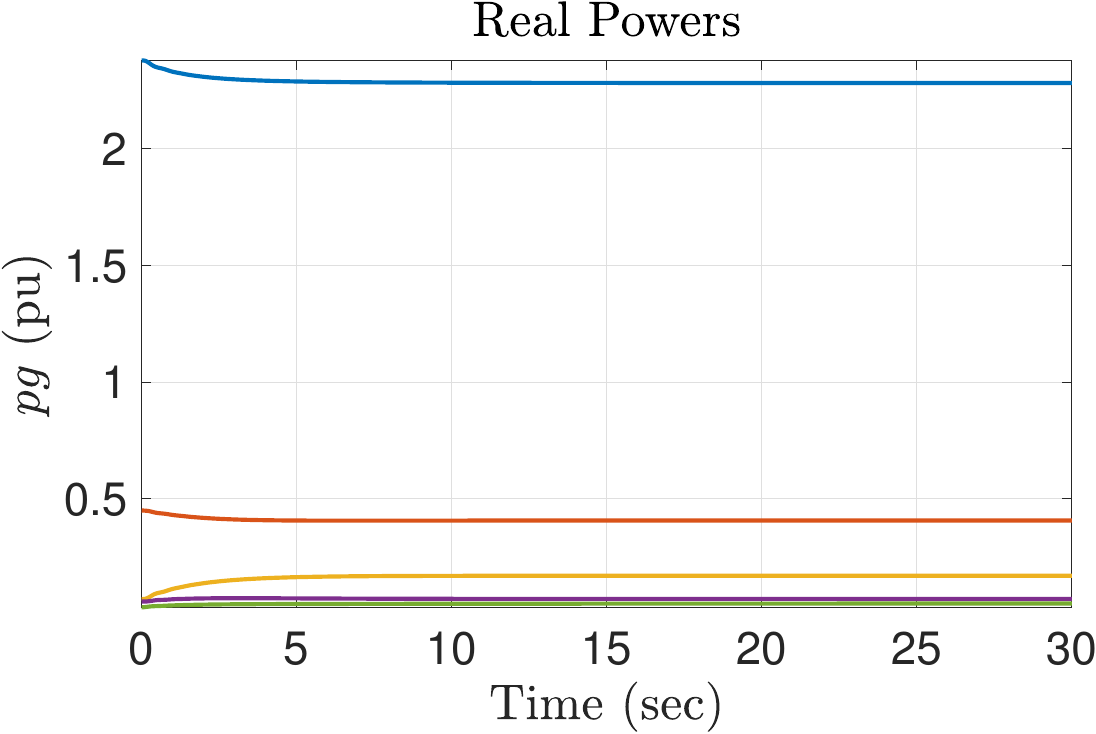}\label{<fi5>}}
	\subfloat[f][]{\includegraphics[scale=0.225]{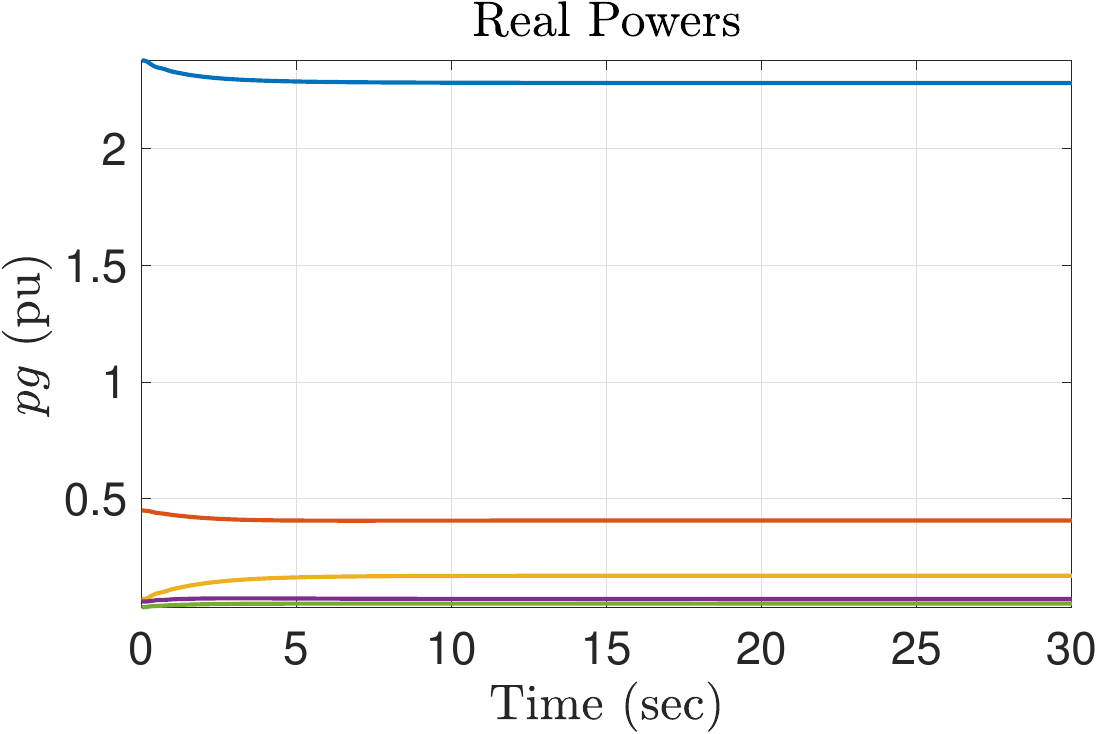}\label{<fi6>}}
	
	\subfloat[g][]{\includegraphics[scale=0.225]{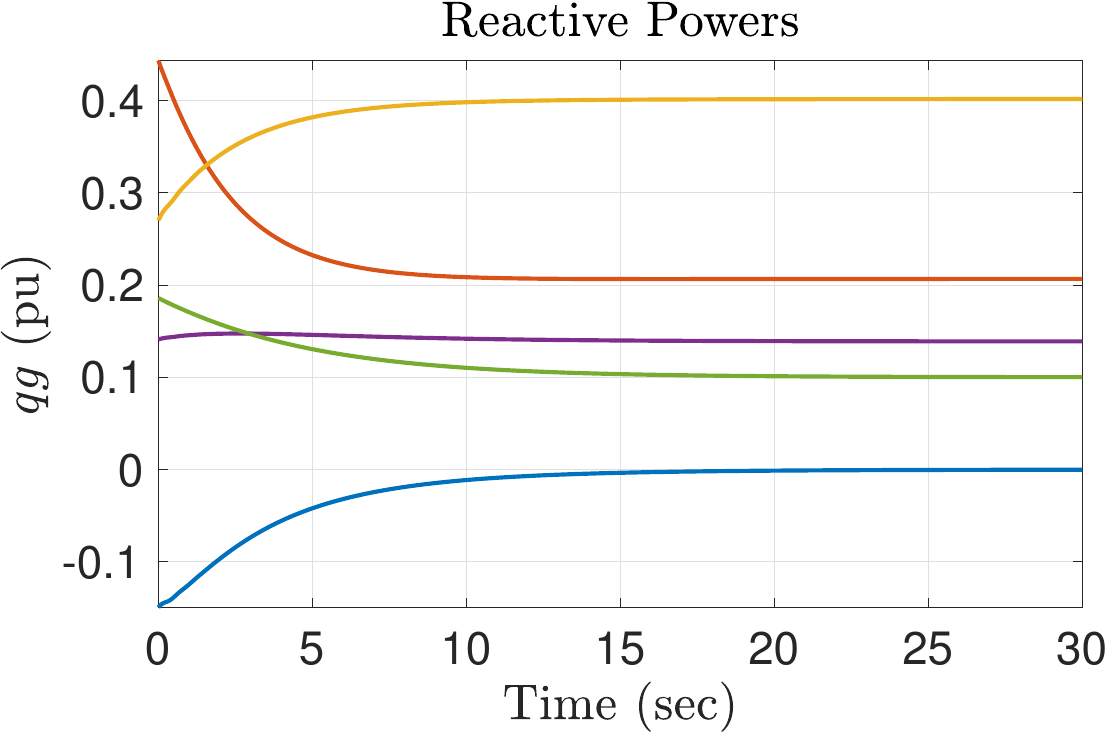}\label{<fi7>}}
	\subfloat[h][]{\includegraphics[scale=0.225]{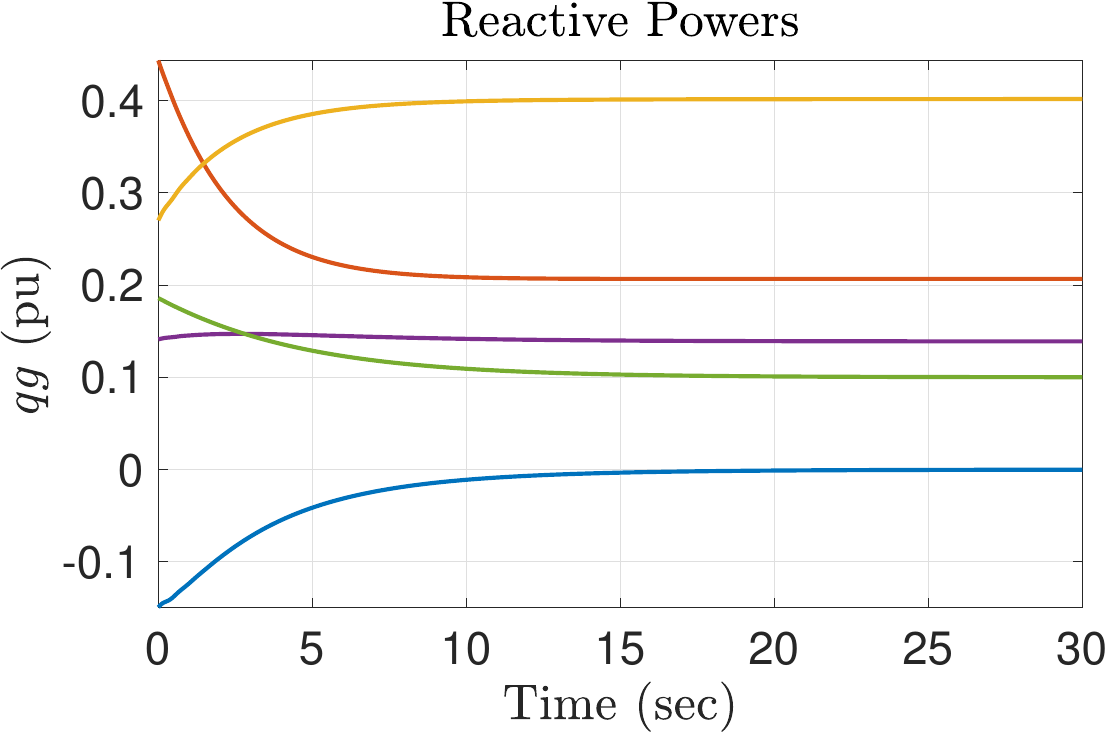}\label{<fi8>}}
	
	\caption{Some of the state trajectories---(a,b) generator voltage magnitudes, (c,d) generator voltage phases, (e,f) real powers, and (g,h) reactive powers---associated with $A+\Delta + BF^{\text{updated}}C$ and $A+\Delta + BF^{\text{typical}}C$ considering case $14$-bus, respectively.}
	\label{fig:stats}
\end{figure}

\section{Concluding Remarks} \label{Con}

This paper proposes a simple yet efficient update of a nominal stabilizing SOF controller. According to the derived theoretical and numerical results throughout the paper, we present the following answer to the question stated in Section \ref{sec:ProFor} (\textit{Q1}):

\noindent \textit{A1}: A least-squares problem built upon the notion of MDRP enables us to propose an efficient updated stabilizing SOF controller. For a known norm-bounded perturbation, we derive sufficient stability conditions followed by the characterized guaranteed stability regions. Geometric metrics are defined to quantify the distance-to-instability of the proposed updated stabilizing SOF controllers.

Moreover, our extensive numerical simulations corroborate that although we utilize a heuristic optimization method to compute the MDRP, it performs quite well in practice compared to an existing approximation method in the literature, namely the hybrid expansion-contraction (HEC) method.

\noindent \textit{Limitations:} Like any engineering solution, the proposed updated stabilizing SOF controller has some limitations. The main limitations are three-fold: (\textit{i}) we propose a semi-dynamic solution to a dynamic problem. The static nature comes from the utilized least-squares problem. The dynamic nature comes from the information stored in the nominal stabilizing SOF controller $F^{\text{nominal}}$ for the state-space triplet $(A,B,C)$ (i.e., $\beta_{\mathbb{R}}(A+BF^{\text{nominal}}C)$), (\textit{ii}) computing the exact value of MDRP $\beta_{\mathbb{R}}(A+BF^{\text{nominal}}C)$ is computationally intractable (NP-hard) and the practical heuristics to estimate $\beta_{\mathbb{R}}(A+BF^{\text{nominal}}C)$ may provide the less accurate values. The less accurate $\beta_{\mathbb{R}}(A+BF^{\text{nominal}}C)$, the less accurate guaranteed stability we get for the proposed update. Also, the more time-consuming practical heuristics we utilize to estimate $\beta_{\mathbb{R}}(A+BF^{\text{nominal}}C)$, the less efficient update we get, and (\textit{iii}) Unlike the typical update, the proposed update can be destabilizing for a subset of perturbations as illustrated by the region outside the guaranteed stability region $S_{\kappa}$ for $\kappa < 1$, i.e., $\beta_{\mathbb{R}}(A+BF^{\text{nominal}}C) < \rho$. However, the positive point about the proposed update is that, unlike the typical update, it always provides a non-empty guaranteed stability region (the typical approach can fail to propose an updated stabilizing SOF controller, as it is a complex problem in general).

Future work will focus on addressing some of these issues, but also extending this algorithm to update solutions for any control-driven SDP when the problem data changes. In particular, an architecture that finds a way to recompute feedback gains (or observer gains for state estimation) without the need to solve the often intractable large-scale SDPs (when quick updates are needed) can address some of the issues of SDP-based solutions. This paper focused only on the SOF architecture, but extensions to other control problems are interesting future directions. 

\bibliographystyle{IEEEtran}
\bibliography{References}

\begin{appendices}

\section{Proof of Lemma \ref{Lem1}} \label{ApndxA}

Considering $\Sigma_H = \begin{bmatrix} \Gamma_H^\top & 0 \end{bmatrix}^\top$ and noting that $U_H^\top U_H = I_{n^2}$ and $V_H^\top V_H = I_{mp}$ hold, we have
\begin{align*}
& P := I_{n^2} - H H^{+} = I_{n^2} - H (H^\top H)^{-1} H^\top =\\
& I_{n^2} - U_H \Sigma_H V_H^\top (V_H \Sigma_H^\top U_H^\top U_H \Sigma_H V_H^\top)^{-1} V_H \Sigma_H^\top U_H^\top=\\
& I_{n^2} - U_H \Sigma_H V_H^\top (V_H (\Gamma_H^{2})^{-1} V_H^\top) V_H \Sigma_H^\top U_H^\top =\\
& I_{n^2} - U_H \begin{bmatrix} \Gamma_H^\top & 0 \end{bmatrix}^\top (\Gamma_H^{2})^{-1} \begin{bmatrix} \Gamma_H^\top & 0 \end{bmatrix} U_H^\top=\\
& U_H I_{n^2} U_H^\top - U_H  \begin{bmatrix} I_{mp} & 0\\0 & 0 \end{bmatrix} U_H^\top= U_H \begin{bmatrix} 0 & 0\\0 & I_{n^2 - mp} \end{bmatrix} U_H^\top.
\end{align*}
Moreover, according to $H := C^\top \otimes B$ and the properties of Kronecker product, we get
\begin{align*}
H :=&~ C^\top \otimes B = (V_C \otimes U_B)(\Sigma_C^\top \otimes \Sigma_B)(U_C \otimes V_B)^\top =\\
&~(V_C \otimes U_B) (U_{\Omega} \Sigma_{\Omega} V_{\Omega}^\top) (U_C \otimes V_B)^\top =\\
&~((V_C \otimes U_B)U_{\Omega}) \Sigma_{\Omega} ((U_C \otimes V_B)V_{\Omega})^\top.
\end{align*}
Then, we have
\begin{align*} 
(U_H,\Sigma_H,V_H) &= ((V_C \otimes U_B)U_{\Omega},\Sigma_{\Omega},(U_C \otimes V_B)V_{\Omega})
\end{align*}
which completes the proof.

\section{Proof of Proposition \ref{Prop1}} \label{ApndxB}

Substituting \eqref{Pform} of Lemma \ref{Lem1} in \eqref{ROV}, we get
\begin{align*}
J^{\ast}(\Delta) =&~ \mathbf{vec}(\Delta)^\top U_H \begin{bmatrix} 0 & 0\\0 & I_{n^2 - mp} \end{bmatrix} U_H^\top \mathbf{vec}(\Delta) =\\ 
&~\delta^\top U_H \begin{bmatrix} 0 & 0\\0 & I_{n^2 - mp} \end{bmatrix} U_H^\top \delta.
\end{align*}
Then, defining $\chi := U_H^\top \delta$ and noting that $U_H U_H^\top = I_{n^2}$ holds (because $U_H$ is a unitary matrix), we get $\delta = U_H \chi$. Since $\delta^\top \delta = \chi^\top U_H^\top U_H \chi$, $U_H^\top U_H = I_{n^2}$, and $\delta^\top \delta = \|\Delta\|_F^2 = r^2$ hold, we get $\chi^\top \chi = r^2$ that inspires us to define $\psi := \frac{\chi}{\| \chi \|} = \frac{\chi}{r}$. Note that $\psi \in \mathbb{R}^{n^2}$ and $\|\psi\| = 1$ hold. Then, we have $\chi = r \psi$ and subsequently we get $\delta = U_H \chi = r U_H \psi$. Defining $\mu \in \mathbb{R}^{mp}$ and $\nu \in \mathbb{R}^{n^2-mp}$ as follows: $\mu := \begin{bmatrix} I_{mp} & 0 \end{bmatrix} \psi$, $\nu := \begin{bmatrix} 0 & I_{n^2-mp} \end{bmatrix} \psi$, we get $\psi = \begin{bmatrix} \mu^\top & \nu^\top \end{bmatrix}^\top$. Since $\|\psi\|^2 = \|\mu\|^2 + \|\nu\|^2 = 1$ holds, we can consider $\|\mu\| = \cos(\frac{\pi \theta}{2})$ and $\|\nu\| = \sin(\frac{\pi \theta}{2})$ for a $\theta \in [0,1]$. Then, we have
\begin{align*}
\mu & = \frac{\mu}{\|\mu\|} \cos \Big(\frac{\pi \theta}{2}\Big), \nu = \frac{\nu}{\|\nu\|} \sin\Big(\frac{\pi \theta}{2}\Big).
\end{align*}
Defining $\phi_c := \frac{\mu}{\|\mu\|}$ and $\phi_s := \frac{\nu}{\|\nu\|}$, we get $\mu = \phi_c \cos(\frac{\pi \theta}{2})$ and $\nu = \phi_s \sin(\frac{\pi \theta}{2})$ (Note that $\|\phi_c\| = 1$ and $\|\phi_s\|=1$ hold). Then, considering the $r = \rho \sin (\frac{\pi \tau}{2})$ with $\tau \in ]0,1]$, we have
\begin{align*}
\Delta =&~\mathbf{vec}^{-1}(\delta) = \mathbf{vec}^{-1}(r U_H \psi) =\\ 
&~r \mathbf{vec}^{-1}\bigg((V_C \otimes U_B)U_{\Omega}\begin{bmatrix} \phi_c \cos(\frac{\pi \theta}{2}) \\ \phi_s \sin(\frac{\pi \theta}{2}) \end{bmatrix}\bigg) = \\
&~\rho \sin \Big(\frac{\pi \tau}{2}\Big) U_B \mathbf{vec}^{-1}\bigg(U_{\Omega}\begin{bmatrix} \phi_c \cos(\frac{\pi \theta}{2}) \\ \phi_s \sin(\frac{\pi \theta}{2}) \end{bmatrix}\bigg) V_C^\top
\end{align*}
which completes the proof of \eqref{MR}.

Also, for $J^{\ast}(\Delta)$ in \eqref{ROV}, we have
\begin{align*}
& J^{\ast}(\Delta) = \mathbf{vec}(\Delta)^\top P \mathbf{vec}(\Delta) = \delta^\top P \delta \overset{(\delta = r U_H \psi)}{=}\\
& r^2 \psi^\top U_H^\top P U_H \psi = r^2 \begin{bmatrix} \mu^\top & \nu^\top \end{bmatrix} \begin{bmatrix} 0 & 0\\0 & I_{n^2 - mp} \end{bmatrix} \begin{bmatrix} \mu \\ \nu \end{bmatrix} = \\ 
& r^2 \nu^\top \nu = r^2 \Big(\phi_s \sin\Big(\frac{\pi \theta}{2}\Big)\Big)^\top \phi_s \sin\Big(\frac{\pi \theta}{2}\Big) =\\
& \Big(r \sin \Big(\frac{\pi \theta}{2}\Big) \Big)^2 \|\phi_s\|^2 = \Big(\rho \sin \Big(\frac{\pi \tau}{2}\Big) \sin \Big(\frac{\pi \theta}{2}\Big) \Big)^2
\end{align*}
which completes the proof of \eqref{JMR}.

\section{Proof of Proposition \ref{Prop2}} \label{ApndxC}

We use \eqref{ImpIneq} as a sufficient condition on the stability of the updated perturbed state-space \eqref{RDPerturbedSD}. By substituting $G^{\ast}_{\Delta}$ in \eqref{ImpIneq}, we get
\begin{align} \label{ssbr}
\sin \Big(\frac{\pi \tau}{2} \Big) \sin \Big(\frac{\pi \theta}{2} \Big) < \frac{\beta_{\mathbb{R}}(A+BFC)}{\rho}.
\end{align}

If $\rho < \beta_{\mathbb{R}}(A+BFC)$ holds, then $F + G^{\ast}_{\Delta}$ with $G^{\ast}_{\Delta}$ in \eqref{Osol} is an updated stabilizing SOF controller because the left-hand-side of \eqref{ssbr} is at most $1$ and the right-hand-side of \eqref{ssbr} is greater than $1$. Then, \eqref{ssbr} holds.

If $\rho \ge \beta_{\mathbb{R}}(A+BFC)$ holds, since $\sin (\frac{\pi \theta}{2})$ attains its maximum value at $\theta = 1$, \eqref{ssbr} reduces to
\begin{align*}
\sin \Big(\frac{\pi \tau}{2} \Big) < \frac{\beta_{\mathbb{R}}(A+BFC)}{\rho}
\end{align*}
or equivalently
\begin{align*}
\tau <  \frac{2}{\pi} \arcsin\Big(\frac{\beta_{\mathbb{R}}(A+BFC)}{\rho}\Big)
\end{align*} 
from which, we define $\kappa$ in \eqref{SSS}. Similarly, we may extract the definition of $\zeta_{\tau,\kappa}$ in \eqref{SSS}. Thus, if $(\tau_{\Delta},\theta_{\Delta}) \in S_{\kappa}$ holds, then $F + G^{\ast}_{\Delta}$ with $G^{\ast}_{\Delta}$ in \eqref{Osol} is an updated stabilizing SOF controller. 

The expression in \eqref{AreaM} expresses the area of $S_{\kappa}$ divided by the area of unit square $]0,1] \times [0,1]$ in $2$-dimensional parametric space of $(\tau,\theta)$. Note that $\int_{0}^{\kappa} 1 d \tau = \kappa$ has simplified the right-hand-side of \eqref{AreaM}. To show that $\xi_{\kappa}$ is an increasing function of $\kappa$, we compute the derivative of $\xi_{\kappa}$ with respect to $\kappa$ as follows (utilizing the Leibniz integral rule \cite{apostol1991calculus}):
\begin{align} \label{Der}
\frac{d \xi_{\kappa}}{d\kappa} &= \cos \Big (\frac{\pi \kappa}{2} \Big ) \int_{\kappa}^{1} \frac{1}{\sqrt{\sin (\frac{\pi \tau}{2})^2 - \sin (\frac{\pi \kappa}{2})^2}} d \tau.
\end{align}
According to \eqref{Der}, $\frac{d \xi_{\kappa}}{d\kappa} \ge 0$ holds and noting that  
\begin{align*}
\frac{d \kappa}{d \rho} &= -\frac{2 \beta}{\pi \rho \sqrt{\rho^2 - \beta^2}} < 0, \frac{d \kappa}{d \beta} = \frac{2}{\pi \sqrt{\rho^2 - \beta^2}} > 0 
\end{align*}
hold, the proof is complete.

\end{appendices}

\begin{IEEEbiography}
[{\includegraphics[width=1in,height=1.25in,clip,keepaspectratio]{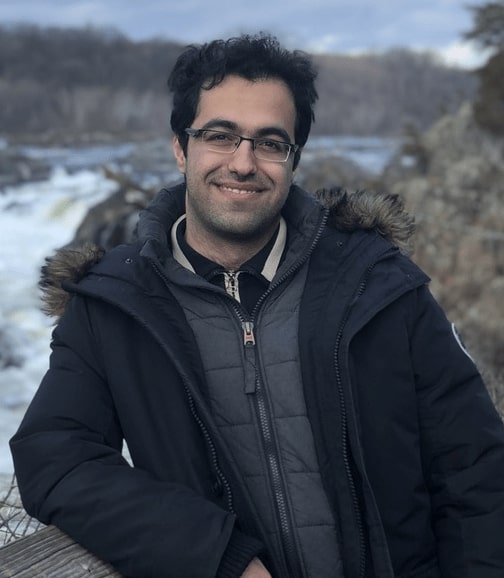}}] {MirSaleh Bahavarnia} \; (Member, IEEE) received the B.Sc. degree in electrical engineering (control) and a certificate of the minor program in mathematics from the Sharif University of Technology, Tehran, Tehran, Iran, in 2013 and the Ph.D. degree in mechanical engineering (control) from the Lehigh University, Bethlehem, PA, USA, in 2018. 

From 2018 to 2020, he was a Postdoctoral Research Associate with the Department of Electrical and Computer Engineering and the Institute for Systems Research (ISR), University of Maryland, College Park, MD, USA. Since 2022, he has been a Postdoctoral Research Scholar with the Department of Civil and Environmental Engineering, Vanderbilt University, Nashville, TN, USA. His research interests include distributed control, feedback control, power systems control, process control, robust control, and traffic control.
\end{IEEEbiography}
 \begin{IEEEbiography}[{\includegraphics[width=1in,height=1.25in,clip,keepaspectratio]{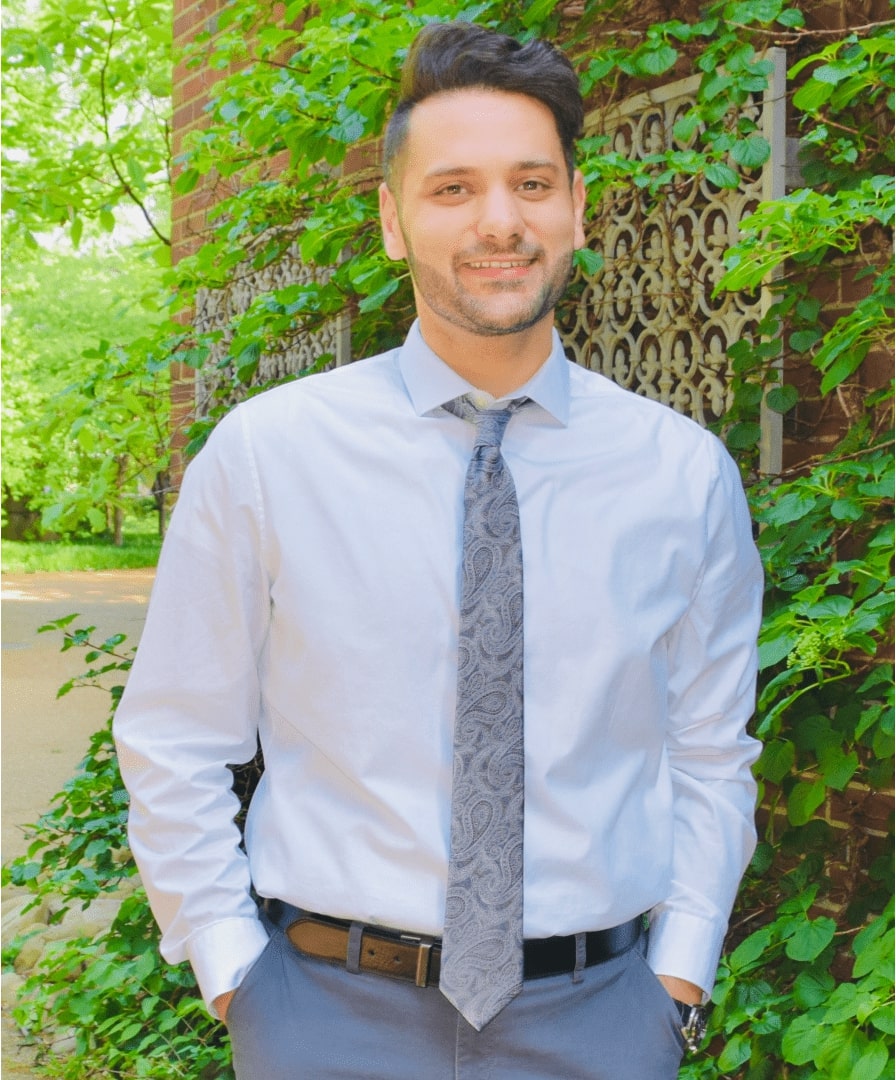}}] {Ahmad F. Taha} \; (Member, IEEE) received the B.E. degree in electrical and computer engineering from the American University of Beirut, Beirut, Lebanon, in 2011, and the Ph.D. degree in electrical and computer engineering from Purdue University, West Lafayette, IN, USA, in 2015. 
 
 Before joining Vanderbilt University, Nashville, TN, USA, he was an Assistant Professor with the ECE Department at the University of Texas, San Antonio. He is an Associate Professor with the Department of Civil and Environmental Engineering at Vanderbilt University. He has a secondary appointment in the ECE Department. His research interests include understanding how complex cyber-physical and urban infrastructures operate, behave, and occasionally misbehave, and optimization, control, monitoring, and security of infrastructure with power, water, and transportation systems applications.
 
 Dr. Taha is an Associate Editor for IEEE Transactions on Control of Network Systems.

\end{IEEEbiography}

\end{document}